\newcommand{\2}{\vspace{0.2 cm}}
\newtheorem{theorem}{Theorem}[section]
\newtheorem{lemma}[theorem]{Lemma}
\newtheorem{proposition}[theorem]{Proposition}
\newtheorem{example}[theorem]{Example}
\newcommand\cA{\mathcal{A}}
\newcommand\co{\mathrm{co}}
\newcommand\sco{\mathcal{CO}}
\newcommand\cc{\mathrm{cc}}
\newcommand\scc{\mathcal{CC}}
\newcommand\cs{{$\mathcal{CS}$}}
\begin{document}
\date{\today}
\title{Algorithms for Generating Convex Sets in Acyclic Digraphs}
\author{P. Balister\thanks{Department of Mathematical Sciences,
University of Memphis, TN 38152-3240, USA. E-mail:
pbalistr@memphis.edu} \and S. Gerke\thanks{Department of Mathematics, Royal Holloway,
University of London, Egham, TW20 0EX, UK, E-mail:
stefanie.gerke@rhul.ac.uk} \and G. Gutin\thanks{Department of Computer Science, Royal Holloway,
University of London, Egham, TW20 0EX, UK, E-mail:
gutin@cs.rhul.ac.uk}\and A. Johnstone \thanks{Department of Computer Science, Royal Holloway,
University of London, Egham, TW20 0EX, UK, E-mail:
adrian@cs.rhul.ac.uk} \and J. Reddington\thanks{Department of Computer Science, Royal Holloway,
University of London, Egham, TW20 0EX, UK, E-mail:
joseph@cs.rhul.ac.uk} \and E. Scott\thanks{Department of Computer Science, Royal Holloway,
University of London, Egham, TW20 0EX, UK, E-mail:
eas@cs.rhul.ac.uk}\and A. Soleimanfallah\thanks{Department of Computer Science, Royal Holloway,
University of London, Egham, TW20 0EX, UK, E-mail:
arezou@cs.rhul.ac.uk} \and A. Yeo\thanks{Department of Computer Science, Royal Holloway,
University of London, Egham, TW20 0EX, UK, E-mail:
anders@cs.rhul.ac.uk}}

\maketitle

\begin{abstract}
A set $X$ of vertices of an acyclic digraph $D$ is convex if $X\neq
\emptyset$ and there is no directed path between vertices of $X$
which contains a vertex not in $X$. A set $X$ is connected if $X\neq
\emptyset$ and the underlying undirected graph of the subgraph of
$D$ induced by $X$ is connected. Connected convex sets and convex
sets of acyclic digraphs are of interest in the area of modern
embedded processor technology. We construct an algorithm $\cal A$
for enumeration of all connected convex sets of an acyclic digraph
$D$ of order $n$. The time complexity of $\cal A$ is $O(n\cdot
cc(D))$, where $cc(D)$ is the number of connected convex sets in
$D$. We also give an optimal algorithm for enumeration of all (not
just connected) convex sets of an acyclic digraph $D$ of order $n$.
In computational experiments we demonstrate that our algorithms
outperform the best algorithms in the literature.

Using the same approach as for $\cal A$, we design an algorithm for
generating all connected sets of a connected undirected graph $G$.
The complexity of the algorithm is $O(n\cdot c(G)),$ where $n$ is
the order of $G$ and $c(G)$ is the number of connected sets of $G.$
The previously reported algorithm for connected set enumeration is
of running time $O(mn\cdot c(G))$, where $m$ is the number of edges
in $G.$
\end{abstract}

\section{Introduction}

A set $X$ of vertices of an acyclic digraph $D$ is {\em convex} if
$X\neq \emptyset$ and there is no directed path between vertices of
$X$ which contains a vertex not in $X$. A set $X$ is {\em connected}
if $X\neq \emptyset$ and the underlying undirected graph of the
subgraph of $D$ induced by $X$ is connected. A set is {\em connected
convex} (a {\em cc-set}) if it is both connected and convex.

In Section \ref{ccsec}, we introduce and study an algorithm $\cal A$
for generating all connected convex sets of a connected acyclic
digraph $D$ of order $n$. The running time of $\cal A$ is $O(n\cdot
cc(D))$, where $cc(D)$ is the number of connected convex sets in
$D$. Thus, the algorithm is (almost) optimal with respect to its
time complexity. Interestingly, to generate only $k$ cc-sets using
$\cal A$ we need $O(n^3+kn)$ time. In Section \ref{expersec}, we
give experimental results demonstrating that the algorithm is
practical on reasonably large data dependency graphs for basic
blocks generated from target code produced by
Trimaran~\cite{trimaran} and SimpleScalar~\cite{simplescalar}. Our
experiments show that $\cal A$ is better than the state-of-the-art
algorithm of Chen, Maskell and Sun \cite{chen}. Moreover, unlike the
algorithm in \cite{chen}, our algorithm has a provable (almost)
optimal worst time complexity.

Although such algorithms are of less importance in our application
area because of wider scheduling issues, there also exist algorithms
that enumerate all of the convex sets of an acyclic graph. Until
recently the algorithm of choice for this problem was that of Atasu,
Pozzi and Ienne~\cite{atasu2003, atasu2006}, however the CMS
algorithm~\cite{chen} (run in general mode) outperforms the API
algorithm in most cases. In Section \ref{all}, we give a different
algorithm, for enumeration of all the convex sets of an acyclic
digraph, which significantly outperforms the CMS and API
algorithms and which has a (optimal) runtime performance of the
order of the sum of the sizes of the convex sets.

Avis and Fukuda \cite{avisDAM65} designed an algorithm for
generating all connected sets in a connected graph $G$ of order $n$
and size $m$ with time complexity $O(mn\cdot c(G))$ and space
complexity $O(n+m)$, where $c(G)$ is the number of connected sets in
$G$. Observe that when $G$ is bipartite there is an orientation $D$
of $G$ such that every connected set of $G$ corresponds to a cc-set
of $D$ and vice versa. To obtain $D$ orient every edge of $G$ from
$X$ to $Y$, where $X$ and $Y$ are the partition classes of $G$.

The algorithm of Avis and Fukuda is based on a so-called reverse
search. Applying the approach used to design the algorithm $\cal A$ to connected
set enumeration, in Section \ref{consec}, we describe an algorithm
$\cal C$ for generating all connected sets in a connected graph $G$
of order $n$ with much better time complexity, $O(n\cdot c(G))$.
This demonstrates that our approach can be applied with success to
various vertex set/subgraph enumeration problems. The space
complexity of our algorithm matches that of the algorithm of Avis
and Fukuda.

\subsection{Algorithms Applications}\label{aasec}

There is an immediate application for $\cal A$ in the field of
so-called {\em custom computing} in which central processor
architectures are parameterized for particular applications.

An embedded or {\em application specific} computing system only ever
executes a single application. Examples include automobile engine
management systems, satellite and aerospace control systems and the
signal processing parts of mobile cellular phones. Significant
improvements in the price-performance ratio of such systems can be
achieved if the instruction set of the application specific
processor is specifically tuned to the application.

This approach has become practical because many modern integrated
circuit implementations are based on Field Programmable Gate Arrays
(FPGA). An FPGA comprises an array of logic elements and a
programmable routing system, which allows detailed design of logic
interconnection to be performed directly by the customer, rather
than a complete (and very high cost) custom integrated circuit
having to be produced for each application. In extreme cases, the
internal logic of the FPGA can even be modified whilst in operation.

Suppliers of embedded processor architectures are now delivering
{\em extensible} versions of their general purpose processors.
Examples include the ARM OptimoDE~\cite{arm}, the MIPS Pro
Series~\cite{mips} and the Tensilica Xtensa~\cite{tens}. The
intention is that these architectures be implemented either as
traditional logic with an accompanying FPGA containing the hardware
for extension instructions, or be completely implemented within a
large FPGA. By this means, hardware development has achieved a new
level of flexibility, but sophisticated design tools are required to
exploit its potential.

The goal of such tools is the identification of time critical or
commonly occurring patterns of computation that could be directly
implemented in custom hardware, giving both faster execution and
reduced program size, because a sequence of base machine
instructions is being replaced by a single custom {\em extension}
instruction. For example, a program solving simultaneous linear
equations may find it useful to have a single instruction to perform
matrix inversion on a set of values held in registers.

The approach proceeds by first locating the {\em basic blocks} of
the program, regions of sequential computation with no control
transfers into them. For each basic block we construct a {\em data
dependency graph} (DDG) which contains vertices for each base
(unextended) instruction in the block, along with a vertex for each
initial input datum. Figure~\ref{ddg} shows an example of a DDG.
There is an arc to the vertex for the instruction $u$ from each
vertex whose instruction computes an input operand of $u$. DDG's are
acyclic because execution within a basic block is by definition
sequential.

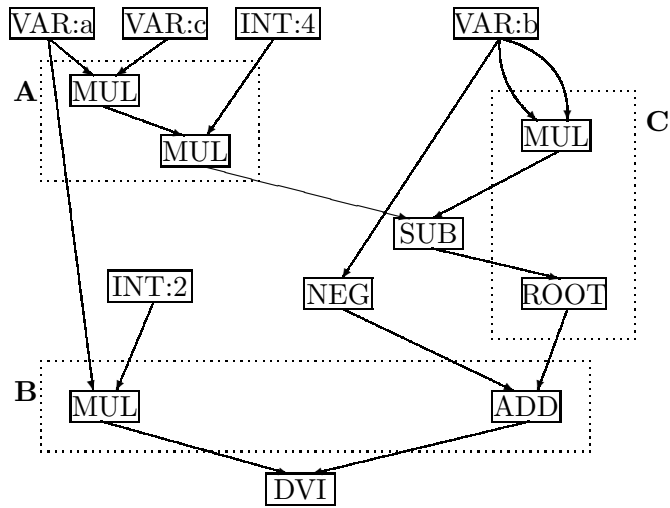
\begin{figure}
\begin{center}
\unitlength 1mm 
\linethickness{0.4pt}
\ifx\plotpoint\undefined\newsavebox{\plotpoint}\fi 
\begin{picture}(88,66)(0,0)
\put(2,62){\framebox(11,4)[cc]{VAR:a}}
\put(17,62){\framebox(11,4)[cc]{VAR:c}}
\put(32,62){\framebox(11,4)[cc]{INT:4}}
\put(61,62){\framebox(11,4)[cc]{VAR:b}}
\put(70,26){\framebox(11,4)[cc]{ROOT}}
\put(15,27){\framebox(11,4)[cc]{INT:2}}
\put(10,53){\framebox(9,4)[cc]{MUL}}
\put(22,45){\framebox(9,4)[cc]{MUL}}
\put(10,11){\framebox(9,4)[cc]{MUL}}
\put(53,34){\framebox(9,4)[cc]{SUB}}
\put(70,47){\framebox(9,4)[cc]{MUL}}
\put(66,11){\framebox(9,4)[cc]{ADD}}
\put(41,26){\framebox(9,4)[cc]{NEG}}
\put(36,0){\framebox(9,4)[cc]{DVI}}
\put(13,57){\vector(4,-3){0}}\multiput(7,62)(.04026846,-.03355705){149}{\line(1,0){.04026846}}
\put(16,57){\vector(-4,-3){0}}\multiput(23,62)(-.04697987,-.03355705){149}{\line(-1,0){.04697987}}
\put(25,49){\vector(3,-1){0}}\multiput(14,53)(.09243697,-.03361345){119}{\line(1,0){.09243697}}
\put(28,49){\vector(-2,-3){0}}\multiput(37,62)(-.033707865,-.048689139){267}{\line(0,-1){.048689139}}
\put(13,15){\vector(1,-4){0}}\multiput(7,62)(.03370787,-.26404494){178}{\line(0,-1){.26404494}}
\put(42,4){\vector(-4,-1){0}}\multiput(71,11)(-.13942308,-.03365385){208}{\line(-1,0){.13942308}}
\put(39,4){\vector(4,-1){0}}\multiput(14,11)(.12019231,-.03365385){208}{\line(1,0){.12019231}}
\put(76,51){\vector(0,-1){0}}\qbezier(67,62)(76.5,59.5)(76,51)
\put(72,51){\vector(1,-1){0}}\qbezier(67,62)(66.5,55.5)(72,51)
\put(58,38){\vector(-2,-1){0}}\multiput(75,47)(-.063670412,-.033707865){267}{\line(-1,0){.063670412}}
\put(27,45){\vector(4,-1){28}}
\put(75,30){\vector(4,-1){0}}\multiput(58,34)(.14285714,-.03361345){119}{\line(1,0){.14285714}}
\put(72,15){\vector(-1,-3){0}}\multiput(76,26)(-.03361345,-.09243697){119}{\line(0,-1){.09243697}}
\put(69,15){\vector(2,-1){0}}\multiput(46,26)(.070336391,-.033639144){327}{\line(1,0){.070336391}}
\put(46,30){\vector(-2,-3){0}}\multiput(67,62)(-.033707865,-.051364366){623}{\line(0,-1){.051364366}}
\put(16,15){\vector(-1,-2){0}}\multiput(21,27)(-.03355705,-.08053691){149}{\line(0,-1){.08053691}}
\put(66,22){\makebox(19,33)[cc]{}}
\multiput(65.93,21.93)(.95,0){21}{{\rule{.4pt}{.4pt}}}
\multiput(65.93,54.93)(.95,0){21}{{\rule{.4pt}{.4pt}}}
\multiput(65.93,54.93)(0,-.97059){35}{{\rule{.4pt}{.4pt}}}
\multiput(84.93,54.93)(0,-.97059){35}{{\rule{.4pt}{.4pt}}}
\put(6,7){\makebox(73,12)[cc]{}}
\multiput(5.93,6.93)(.98649,0){75}{{\rule{.4pt}{.4pt}}}
\multiput(5.93,18.93)(.98649,0){75}{{\rule{.4pt}{.4pt}}}
\multiput(5.93,18.93)(0,-.9231){14}{{\rule{.4pt}{.4pt}}}
\multiput(78.93,18.93)(0,-.9231){14}{{\rule{.4pt}{.4pt}}}
\put(6,43){\makebox(29,16)[cc]{}}
\multiput(5.93,42.93)(.9667,0){31}{{\rule{.4pt}{.4pt}}}
\multiput(5.93,58.93)(.9667,0){31}{{\rule{.4pt}{.4pt}}}
\multiput(5.93,58.93)(0,-.9412){18}{{\rule{.4pt}{.4pt}}}
\multiput(34.93,58.93)(0,-.9412){18}{{\rule{.4pt}{.4pt}}}
\put(4,55){\makebox(0,0)[cc]{{\bf A}}}
\put(88,51){\makebox(0,0)[cc]{{\bf C}}}
\put(4,15){\makebox(0,0)[cc]{{\bf B}}}
\end{picture}
\end{center}
\caption{Data dependency graph for $\frac{-b + \sqrt{b^2-4ac}}{2a}$}
\label{ddg}
\end{figure}

Extension instructions are combinations of base machine instructions
and are represented by sets of the DDG. In Figure~\ref{ddg},
sections A and B are convex sets that represent candidate extension
instructions. However, Section B is not connected.  If such a region
were implemented as a single extension instruction we should have
separate independent hardware units within the instruction. Although
this presents no special difficulties, and in Section~\ref{all} we
give an optimal algorithm for constructing all such sets, present
engineering practice is to restrict the search to connected convex
components on the grounds that unconnected convex components are
composed of connected ones, and that the system's code scheduler
will perform better if it is allowed to arrange the independent
computations in different ways at different points in the program.

Unlike connectivity however, convexity is not optional. An extension
instruction cannot perform computations that depend on instructions
external to the extension instruction. This means that there can be
no data flows out of and then back into the extension instruction:
the set corresponding to an extension instruction must be convex.
Thus section C in Figure~\ref{ddg} does not represent a candidate
extension instruction since it breaches the `no external computation
rule' because it is non-convex: there is a path {\em via} the {\tt
SUB} node that is not in the set.

Ideally we would like to fully consider all possible candidate
instructions and select the combination which results in the most
efficient implementation. In practice this is unlikely to be
feasible as, in worst case, the number of candidates will be
exponential in the number of original program instructions. However,
it is useful to have a process which can find all the potential
instructions, even if the set of instructions used for final
consideration has to be restricted. In this work we only deal with
generation of a set of possible candidate instructions. Interested
readers can refer to~\cite{atasu2006,yu}.

\subsection{ Related Theoretical Research}\label{rtrsec}

Many other algorithms for special vertex set/subgraph generation
have been studied in the literature. Kreher and Stinson
\cite{kreher} describe an algorithm for generating all cliques in a
graph $G$ of order $n$ with running time $O(n\cdot cl(G))$, where
$cl(G)$ is the number of cliques in $G.$

Several algorithms have been suggested for the generation of all spanning
trees in a connected graph $G$ of order $n$ and size $m$. Let $t$ be
the number of spanning trees in $G$. The first spanning trees
generating algorithms \cite{gabowSIAMJC7,mintyIEEETCT12,readN5} used
backtracking which is useful for enumerating various kinds of
subgraphs such as paths and cycles. Using the algorithms from
\cite{mintyIEEETCT12,readN5}, Gabow and Myers \cite{gabowSIAMJC7}
suggested an algorithm with time complexity $O(tn+n+m)$ and space
complexity $O(n+m)$. If we output all spanning trees by their edges,
this algorithm is optimal in terms of time and space complexities.
Later algorithms of a different type were developed; these algorithms
(see, e.g., \cite{kapoorSIAMJC24,shiouraJORSJ38,shiouraSIAMJC26})
find a new spanning tree by exchanging a pair of edges. As a result,
the algorithms of Kapoor and Ramesh \cite{kapoorSIAMJC24} and
Shioura and Tamura \cite{shiouraJORSJ38} require only $O(t+n+m)$
time and $O(nm)$ space. The algorithm of Shioura, Tamura and Uno
\cite{shiouraSIAMJC26} is of the same optimal running time, but also
of optimal space: $O(n+m).$

An out-tree is an orientation of a tree such that all vertices but
one are of in-degree 1. Kapoor, Kumar and Ramesh \cite{kapoorA27}
presented an algorithm for enumerating all spanning out-trees of a
digraph with $n$ vertices, $m$ arcs and $t$ spanning out-trees. The
algorithm takes $O(\log n )$ time per spanning tree; more precisely,
it runs in $O(t \log n +n^2\alpha(n,n)+nm)$, where $\alpha$ is the
Inverse Ackermann function. It first outputs a single spanning
out-tree and then a list of arc swaps; each spanning out-tree can be
generated from the first spanning out-tree by applying a prefix of
this sequence of arc swaps.

\section{Terminology, Notation and Preliminaries}

Let $D$ be a digraph. If $xy$ is an arc of $D$ ($xy\in A(D)$), we
say that $y$ is an {\em out-neighbor} of $x$ and $x$ is an {\em
in-neighbor} of $y$. The set of out-neighbors of $x$ is denoted by
$N^+_D(x)$ and the set of in-neighbors of $x$ is denoted by
$N^-_D(x)$. For a set $X$ of vertices of $D$, its {\em
out-neighborhood} (resp.\ {\em in-neighborhood}) is $N^+_D(X)=\bigcup_{x\in
X}N^+_D(x)\setminus X$ (resp.\ $N^-_D(X)=\bigcup_{x\in X}N^-_D(x)\setminus
X$). A digraph $D^{TC}$ is called the {\em transitive closure} of
$D$ if $V(D^{TC})=V(D)$ and a vertex $x$ is an in-neighbor of a
vertex $y$ in $D^{TC}$ if and only if there is a path from $x$ to
$y$ in $D.$

Let $S$ be a non-empty set of vertices of a digraph $D$. A directed
path $P$ of $D$ is an $S$-{\em path} if $P$ has at least three
vertices, its initial and terminal vertices are in $S$ and the rest
of the vertices are not in $S.$  For a digraph $D$, $\scc(D)$
($\sco(D)$) denotes the collection of cc-sets (convex sets) in $D;$
$\cc(D)=|\scc(D)|$ and $\co(D)=|\sco(D)|.$ An ordering
$v_1,v_2,\ldots, v_n$ of vertices of an acyclic digraph $D$ is
called {\em acyclic} if for every arc $v_i v_j$ of $D$ we have
$i<j$.

\begin{lemma} \label{lemma_transitive}
Let $D$ be a connected acyclic digraph and let $S$ be a vertex set
in $D$. Then $S$ is a cc-set in $D$ if and only if it is a cc-set in
$D^{TC}$.
\end{lemma}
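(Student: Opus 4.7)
The plan is to handle convexity and (underlying undirected) connectedness separately, exploiting the fact that $D^{TC}$ contains all arcs of $D$ plus one extra arc $xy$ whenever there is a $D$-path from $x$ to $y$.

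For convexity, I would argue both directions by contrapositive. If $S$ is not convex in $D$, there is an $S$-path $u, x_1, \ldots, x_k, v$ in $D$. Then $ux_1$ and $x_1v$ are arcs of $D^{TC}$ (the latter because $x_1 \to x_2 \to \cdots \to x_k \to v$ is a path in $D$), so $u, x_1, v$ is an $S$-path in $D^{TC}$. Conversely, suppose $S$ is not convex in $D^{TC}$, witnessed by an $S$-path $u = w_0, w_1, \ldots, w_\ell = v$. Each arc $w_iw_{i+1}$ of $D^{TC}$ corresponds to a directed $D$-path $P_i$ from $w_i$ to $w_{i+1}$. Concatenating the $P_i$'s gives a walk in $D$ from $u$ through $w_1 \notin S$ to $v$; since $D$ is acyclic, any such walk must in fact be a simple directed path (otherwise one would extract a directed cycle), so $D$ contains a path from $u$ to $v$ visiting a vertex outside $S$, and $S$ is not convex in $D$.

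For connectedness, the direction from $D$ to $D^{TC}$ is immediate because $D[S]$ is a spanning subgraph of $D^{TC}[S]$. For the reverse, assume $S$ is convex in $D$ (we have just shown this follows from convexity in $D^{TC}$) and suppose $D^{TC}[S]$ is connected. Take any $u, v \in S$ and an undirected path $u = y_0, y_1, \ldots, y_m = v$ in $D^{TC}[S]$. Each step $y_iy_{i+1}$ (or $y_{i+1}y_i$) is an arc of $D^{TC}$ between vertices of $S$, so it corresponds to a directed $D$-path whose interior vertices must, by convexity of $S$ in $D$, all lie in $S$. Hence $y_i$ and $y_{i+1}$ are connected in the underlying undirected graph of $D[S]$, and piecing these together shows $D[S]$ is connected.

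I expect the only genuinely delicate step is the expansion argument in the forward convexity direction — specifically, justifying that the concatenation of the $P_i$'s is actually a simple path passing through $w_1 \notin S$. This is where acyclicity of $D$ enters in an essential way, since otherwise a ``shortcut'' arc in $D^{TC}$ could fail to lift to an intermediate vertex outside $S$ in $D$. Everything else is a direct unpacking of the definitions.
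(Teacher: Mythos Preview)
Your proposal is correct and follows essentially the same line as the paper's proof: expand each arc of $D^{TC}$ to a directed path in $D$, invoke acyclicity to ensure the concatenation is simple, and then use convexity of $S$ in $D$ to force the expanded paths between vertices of $S$ to stay inside $S$ for the connectedness direction. The only cosmetic differences are that the paper states the convexity equivalence as ``there is an $S$-path in $D$ iff there is an $S$-path in $D^{TC}$'' (using the original path in the $D\to D^{TC}$ direction rather than your length-two shortcut), and handles the connectedness implication $D^{TC}\Rightarrow D$ by contradiction rather than by your direct expansion of an undirected path.
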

\begin{proof} Let $S$ be a set of vertices of $D$. We will first prove that
there is an $S$-path in $D$ if and only if there is an $S$-path in
$D^{TC}.$ Since all arcs of $D$ are in $D^{TC}$, every $S$-path in
$D$ is an $S$-path in $D^{TC}.$ Let $Q=x_1x_2\ldots x_q$ be an
$S$-path in $D^{TC}$. Then there are paths $P_2,P_3,\ldots ,P_q$
such that $Q'=x_1P_2x_2P_3x_3\ldots x_{q-1}P_qx_q$ is a path in $D$
($Q'$ must be a path since $D$ is acyclic). Since $x_1$ and $x_q$
belong to $S$ and $x_2$ does not belong to $S$, there is a subpath
of $Q'$ which is an $S$-path.

If $S$ is connected in $D$ then it is clearly connected in $D^{TC}$,
which implies that if $S$ is a cc-set in $D$ then it is a cc-set in
$D^{TC}$. Now let $S$ be a cc-set in $D^{TC}$. Assume that $D[S]$ is
not connected and let $x$ and $y$ be vertices in different connected
components in $D[S]$, but which are connected by an arc in $D^{TC}$.
Without loss of generality $xy$ is the arc in $D^{TC}$ and $Q$ is a
path from $x$ to $y$ in $D$. However as $S$ is convex all vertices
in $Q$ also belong to $S$ and therefore $x$ and $y$ belong to the
same connected component in $D[S]$, a contradiction. \end{proof}

\vspace{3mm}

It is well-known  (see, e.g., the paper
\cite{fisher1971}  by Fisher and Meyer, or \cite{furmanSMD11} by
Furman) that the transitive closure problem and the matrix
multiplication problem are closely related: there exists an
$O(n^a)$-algorithm, with $a\ge 2$, to compute the transitive
closure of a digraph of order $n$ if and only if the product of
two boolean $n\times n$ matrices can be computed in $O(n^a)$ time. 
Coppersmith and Winograd \cite{coppersmith1987}
showed that there exists an $O(n^{2.376})$-algorithm for the
matrix multiplication. Thus, we have the following:

\begin{theorem}\label{remtr}
The transitive closure of a digraph of order $n$ can be found in
$O(n^{2.376})$ time. 
\end{theorem}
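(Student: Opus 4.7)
The plan is a direct combination of the two results quoted in the paragraph immediately preceding the statement. The Fisher-Meyer/Furman equivalence asserts that, for $a \geq 2$, an $O(n^a)$-time algorithm for multiplying two boolean $n \times n$ matrices yields an $O(n^a)$-time algorithm for computing the transitive closure of an $n$-vertex digraph. Coppersmith and Winograd provide a matrix multiplication algorithm with exponent $a = 2.376 \geq 2$. Invoking the ``if'' direction of the equivalence with this value of $a$ produces the desired transitive closure algorithm, and the bound follows at once.

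The only technicality worth flagging is that the Coppersmith-Winograd bound is stated for arithmetic matrix multiplication over a ring, whereas the Fisher-Meyer/Furman equivalence is phrased for boolean matrix multiplication. Boolean multiplication reduces to arithmetic multiplication at no asymptotic cost: compute $C = AB$ over the integers with entries in $\{0, 1\}$, so intermediate values are bounded by $n$ and need only $O(\log n)$ bits per cell (which does not affect the exponent), then threshold each entry of $C$ to $1$ if it is nonzero. The thresholding costs $O(n^2)$ and is absorbed by the $O(n^{2.376})$ term.

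Since both underlying results are invoked as black boxes, there is no substantive mathematical obstacle; the argument is essentially bibliographic. If one wanted a self-contained proof of the reduction from transitive closure to matrix multiplication (the direction actually used), the cleanest route is the divide-and-conquer scheme of Fisher and Meyer: partition $V(D)$ into two halves of size $n/2$, compute their transitive closures recursively, and connect them with $O(1)$ boolean $n/2 \times n/2$ matrix multiplications. The resulting recurrence $T(n) = 2T(n/2) + O(n^{2.376})$ resolves to $T(n) = O(n^{2.376})$ because $2.376 > 2$, avoiding the spurious $\log n$ factor that repeated squaring would produce.
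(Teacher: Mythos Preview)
Your proposal is correct and mirrors the paper's approach exactly: the paper gives no formal proof of this theorem at all, simply stating the Fisher--Meyer/Furman equivalence and the Coppersmith--Winograd bound in the preceding paragraph and then writing ``Thus, we have the following.'' Your additional remarks on the boolean-versus-arithmetic multiplication gap and the divide-and-conquer sketch go beyond what the paper supplies, but the core argument---invoking both cited results as black boxes---is identical.
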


We will need the following two results proved in \cite{GuYe}.

\begin{theorem} \label{lower_bound}
For every connected acyclic  digraph $D$ of order $n$,  $cc(D) \geq
  n(n+1)/2$. If an acyclic digraph $D$ of order $n$ has a Hamiltonian path, then
  $cc(D)=n(n+1)/2$.
\end{theorem}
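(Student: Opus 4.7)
My plan is to prove a stronger auxiliary claim by induction and then deduce both halves of the theorem from it.

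\textbf{Auxiliary claim:} for any connected acyclic digraph $D$ of order $n$ and any vertex $v \in V(D)$, the number of cc-sets of $D$ containing $v$ is at least $n$. I would prove this by induction on $n$, using that a spanning tree of the underlying undirected graph of $D$ has at least two leaves when $n \geq 2$. Choosing a leaf $u \neq v$ makes $D-u$ a connected acyclic digraph of order $n-1$ containing $v$, which inductively has at least $n-1$ cc-sets containing $v$; these are all cc-sets of $D$ not containing $u$. The vertex set $V(D)$ itself furnishes a further cc-set of $D$ containing both $v$ and $u$ (it is convex trivially and connected by hypothesis), yielding the required $n$ cc-sets.

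The main lower bound $cc(D) \geq n(n+1)/2$ follows by the same induction one level up: for $n \geq 2$, pick a leaf $w$ of a spanning tree of $D$, so that $D-w$ is connected acyclic of order $n-1$, and combine the $(n-1)n/2$ cc-sets from the inductive hypothesis applied to $D-w$ (none containing $w$) with the at-least-$n$ cc-sets of $D$ containing $w$ supplied by the auxiliary claim, giving $n(n+1)/2$ cc-sets in total.

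For equality in the Hamiltonian case with path $v_1 \to v_2 \to \cdots \to v_n$, I would first observe that acyclicity forces this listing to be an acyclic ordering, so every arc is index-increasing. I would then show that the cc-sets of $D$ are exactly the $\binom{n+1}{2}=n(n+1)/2$ intervals $\{v_i, v_{i+1}, \ldots, v_j\}$: each such interval is connected through the Hamiltonian subpath and convex because arc indices are monotone, and conversely any cc-set with extremal indices $i$ and $j$ must, by convexity applied to the Hamiltonian subpath from $v_i$ to $v_j$, contain every intermediate $v_k$.

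The main obstacle I anticipate is identifying the right auxiliary statement; once ``at least $n$ cc-sets contain $v$'' is isolated, both inductions reduce to the standard spanning-tree-leaf trick, and the Hamiltonian case becomes an interval-counting exercise.
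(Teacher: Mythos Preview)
The paper does not itself prove this theorem; it is quoted from the companion paper of Gutin and Yeo, so there is no in-paper argument to compare against. Your treatment of the Hamiltonian case is correct. The inductive argument for the lower bound, however, has a genuine gap.

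You claim that if $u$ is a leaf of some spanning tree of the underlying graph, then every cc-set of $D-u$ is a cc-set of $D$. Connectedness does transfer, but convexity need not: a spanning-tree leaf may have both an in-neighbour and an out-neighbour in $D$, and then a directed path passing through $u$ can violate convexity in $D$ of a set that was convex in $D-u$. Concretely, take $D$ on $\{1,2,3\}$ with arcs $1\to 2$, $2\to 3$, $1\to 3$. The spanning tree with edge set $\{\{1,3\},\{2,3\}\}$ has $u=2$ as a leaf, yet $\{1,3\}$ is a cc-set of $D-2$ while the path $1\to 2\to 3$ shows it is not convex in $D$. The same defect reappears in your outer induction with the leaf $w$.

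The fix is to choose $u$ to be a \emph{source or sink} of $D$ that is also not a cut vertex; then convexity lifts automatically (no directed path has a source or sink as an interior vertex) and $D-u$ stays connected. That such a vertex exists---indeed, that at least two do, so one can always avoid a prescribed $v$---is true (a short argument via the block--cut tree of the underlying graph suffices: each leaf block contributes a non-cut vertex that is a source or sink of $D$), but it is an additional lemma you must supply; the spanning-tree-leaf trick alone does not deliver it. With that lemma in hand, both of your inductions go through verbatim.
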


\begin{theorem}\label{upper_bound}
Let $f(n)=2^n+n+1-d_n$, where $d_n=2\cdot 2^{n/2}$ for every even
$n$ and $d_n=3\cdot 2^{(n-1)/2}$ for every odd $n$. For every
connected acyclic  digraph $D$ of order $n$,  $cc(D) \leq f(n).$ Let
$\vec{K}_{p,q}$ denote the digraph obtained from the complete
bipartite graph $K_{p,q}$ by orienting every edge from the partite
set of cardinality $p$ to the partite set of cardinality $q.$ We
have $\cc(\vec{K}_{a,n-a})=f(n)$ provided $|n-2a|\le 1$.
\end{theorem}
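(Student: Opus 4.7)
I would split the theorem into the extremal calculation for $\vec{K}_{a,n-a}$ and the general upper bound, tackling the construction first since it motivates the shape of $f(n)$. In $\vec{K}_{a,n-a}$ every arc goes from the $a$-part $A$ to the $(n-a)$-part $B$, so no directed path has length at least two, and therefore \emph{every} non-empty subset of $V(D)$ is automatically convex. A subset $S$ is then connected iff $|S|=1$ or $S$ meets both $A$ and $B$ (in which case the induced underlying graph is complete bipartite on $(S\cap A,S\cap B)$). This gives
$$cc(\vec{K}_{a,n-a})=n+(2^a-1)(2^{n-a}-1)=2^n+n+1-(2^a+2^{n-a}),$$
and an elementary convexity argument shows $2^a+2^{n-a}$ is minimised over integer $a\in[0,n]$ precisely when $|n-2a|\le 1$, yielding exactly $d_n$. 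Hence $cc(\vec{K}_{a,n-a})=f(n)$ under the stated hypothesis.

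For the upper bound $cc(D)\le f(n)$ the first move is Lemma~\ref{lemma_transitive}: replacing $D$ by $D^{TC}$, I may assume $D$ encodes a strict partial order, and convexity of $S$ reduces to the condition that no $y\notin S$ has both an in-neighbour and an out-neighbour in $S$. I would then argue by induction on $n$. The base cases are small and can be checked by hand. For the inductive step, I would pick a distinguished vertex $v$ of $D^{TC}$ (either a source, a sink, or a vertex whose removal keeps the underlying graph connected) and partition $\scc(D)$ into cc-sets avoiding $v$ and cc-sets containing $v$. The former is bounded by $cc(D-v)\le f(n-1)$ via the inductive hypothesis; the latter must be bounded by $f(n)-f(n-1)$ using the in/out-neighbourhood structure of $v$ in the partial order. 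Case analysis according to whether $D$ has a Hamiltonian path (in which case Theorem~\ref{lower_bound} already gives $cc(D)=n(n+1)/2\ll f(n)$) should handle the degenerate regimes.

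The main obstacle is justifying the precise $d_n\sim 2^{n/2}$ correction. The extremal example suggests that the non-cc subsets to be produced are essentially the non-singleton subsets of the two partite classes, i.e.\ non-singleton subsets of an antichain of $D^{TC}$: if $A$ is an antichain then $D^{TC}[A]$ is edgeless, so any subset of $A$ of size $\ge 2$ is disconnected. Hence if one can locate two complementary antichains of sizes $\lfloor n/2\rfloor$ and $\lceil n/2\rceil$, one immediately finds $2^{\lfloor n/2\rfloor}+2^{\lceil n/2\rceil}-n-2=d_n-n-2$ non-cc sets, which matches the bound. In general, however, $D^{TC}$ need not have height two, so one must trade off antichain-based disconnected sets against non-convex sets arising from longer chains — showing that in \emph{every} poset on $n$ vertices at least $d_n-n-2$ non-empty subsets fail to be cc-sets. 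Making this trade-off quantitatively tight, and correctly handling the parity distinction in $d_n$, is the delicate part of the argument that the induction must ultimately encode.
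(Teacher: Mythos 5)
Your treatment of the extremal digraph is correct and complete: in $\vec{K}_{a,n-a}$ every directed path has only two vertices, so every non-empty set is convex; connectivity holds exactly for singletons and for sets meeting both parts; and the count $n+(2^a-1)(2^{n-a}-1)=2^n+n+1-(2^a+2^{n-a})$ together with $2^a+2^{n-a}=d_n$ when $|n-2a|\le 1$ gives $\cc(\vec{K}_{a,n-a})=f(n)$. Note, for calibration, that this paper does not prove Theorem~\ref{upper_bound} at all — it is imported from \cite{GuYe} — so there is no in-paper argument to measure against; the proposal has to stand on its own.

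It does not, because the substantive half, $cc(D)\le f(n)$ for every connected acyclic digraph, is only sketched and the one step that carries all the difficulty is missing. Your induction splits $\scc(D)$ into cc-sets avoiding a chosen vertex $v$ (bounded by $cc(D-v)$) and cc-sets containing $v$, which you say ``must be bounded by $f(n)-f(n-1)$''; but no argument is offered for that bound, and it is not true for an arbitrary choice of $v$ — the admissible choice of $v$ (and why such a vertex exists) is precisely what a proof has to supply. Two further unaddressed points: $D-v$ may be disconnected, so invoking $f(n-1)$ requires a superadditivity statement for $f$ over components that you never formulate; and the antichain heuristic that motivates $d_n$ collapses for posets of large height (a transitive tournament has no antichain of size $2$, and there the non-cc sets are non-convex rather than disconnected), so the claim that every $n$-vertex poset has at least $d_n-n-2$ non-cc subsets — which is equivalent to the theorem — is exactly the ``delicate trade-off'' your last paragraph concedes rather than proves. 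As written, the proposal establishes the equality $\cc(\vec{K}_{a,n-a})=f(n)$ but leaves the upper bound $cc(D)\le f(n)$ unproven.
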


\section{Algorithm for Generating CC-Sets of an Acyclic
Digraph}\label{ccsec}

In this section $D$ denotes a connected acyclic digraph of order $n$
and size $m$. Now we describe the main algorithm of this paper; we
denote it by ${\cal A}$. The input of ${\cal A}$ is $D$ and ${\cal
A}$ outputs all cc-sets of $D$. The formal description of ${\cal A}$
is followed by an example and proofs of correctness of ${\cal A}$
and its complexity. Finally, we show that to produce $k$ cc-sets
$\cal A$ requires $O(n^{2.376}+kn)$ time. The algorithm works as follows. 
Given a digraph
$D$ on $n$ vertices, it considers an acyclic ordering $v_1,\ldots,v_n$ of the
transitive closure of $D$.
For each vertex $v_i$ we consider the sets $X=\{v_i\}$ and $Y=\{v_{i+1},\ldots ,
v_{n}\}$ and call the subroutine ${\cal B}(X,Y,D)$ which finds all cc-sets $S$ in
$D$ such
that $X \subseteq S \subseteq X \cup Y$.
At each step, if possible ${\cal B}(X,Y,D)$ removes an element $v$ from $Y$ and adds
it to $X$. If $X$ has out-neighbors we choose $v$ to be the
`largest' out-neighbor in the acyclic ordering(line 3), otherwise if $X$ has
in-neighbors
we choose $v$ to be the `smallest' in-neighbor (line 8).  Then we
find the other vertices required to maintain convexity (line 4 or
line 9). If there are no in- or out-neighbors we output $X$,
otherwise we find find all the cc-sets such that $X\subseteq
S\subseteq X\cup Y$ and $v\in S$ (line 12) and then all the cc-sets
such that $X\subseteq S\subseteq X\cup Y$ and $v\not\in S$ (line
13).

\begin{description}
  \item[Step 1:] Find the transitive closure of $D$ and
  set $D =D^{TC}.$
  \item[Step 2:] Find an acyclic ordering $v_1,v_2,\ldots,v_n$ of
  $D$.
  \item[Step 3:] For each $i=1,2,\ldots,n$ do the following.  Set $X:=\{v_i\}$,
$Y:=\{v_{i+1},v_{i+2}, \ldots, v_{n} \}$ and call ${\cal B}(X,Y,D)$.

\item[Step 4 subroutine ${\cal B}(X,Y,D)$:]\mbox{}\\
{ \obeylines\obeyspaces{
 \ 1.   set $A = N^+_{D^{TC}}(X)\cap Y$
 \ 2.  {\bf if} $A\not=\emptyset$ $\{$
 \ 3.        set $v = v_j$, where $j=$max$\{i:\ v_i\in A\}$
 \ 4.        set $R = \{v\}\cup (N^-_{D^{TC}}(v)\cap A)$  $\}$
 \ 5.  {\bf else} $\{$
 \ 6.        set $B = N^-_{D^{TC}}(X)\cap Y$
 \ 7.       {\bf if} $B\not=\emptyset$ $\{$
 \ 8.             set $v = v_k$, where $k=$min$\{i:\ v_i\in B\}$
 \ 9.             set $R = \{v\}\cup (N^+_{D^{TC}}(v)\cap B)$  $\}$ $\}$
 \ 10. {\bf if} $A=\emptyset$ and $B=\emptyset$ $\{$ output $X$ $\}$
 \ 11. {\bf else} $\{$
 \ 12.        ${\cal B}(X\cup R,\ Y\backslash R,\ D)$
 \ 13.        ${\cal B}(X,\ Y\backslash \{v\},\ D)$ $\}$ $\}$
}}
\end{description}
Before proving the correctness of $\cal A$, we consider an example.

\begin{example} Let $D$ be the graph on the left below
\begin{center}
\footnotesize
\input{ex1.pic}
\end{center}
In Step 1, we find $A(D^{TC})=A(D)\cup \{v_1v_3,v_2v_5,v_1v_5\}$
(above right). Observe that $v_1,v_2,v_3,v_4,v_5$ is an acyclic
ordering. We may assume that this is the ordering found in Step 2.

For $i=1$ in Step 3, we have $X=\{v_1\}$ and
$Y=\{v_2,v_3,v_4,v_5\}=N^+(X)$, and we call ${\cal B}(\{v_1\},
\{v_2,v_3,v_4,v_5\}, D)$. Then in Step 4, line 1, we compute
$A=\{v_2,v_3,v_4,v_5\}$ and then, lines 3 and 4, obtain $v=v_5$,
$N^-_{D^{TC}}(v)=\{v_1,v_2,v_3,v_4\}$ and $R=\{v_2,v_3,v_4,v_5\}$.
Then, at line 12, we make a recursive call to ${\cal
B}(V(D),\emptyset,D)$. In this call we have $A=B=\emptyset$ so, at
line 10, the set $V(D)=\{v_1,\ldots,v_5\}$ is output and the
recursive call returns, to line 13 of ${\cal B}(\{v_1\},
\{v_2,v_3,v_4,v_5\},D)$, where we make a call to ${\cal B}(\{v_1\},
\{v_2,v_3,v_4\},D)$. We are now effectively looking at the graph
$D_1$ below.
\begin{center}
\footnotesize
\input{ex2.pic}
\end{center}
In Step 4, lines 1-4, we compute $A=\{v_2,v_3,v_4\}$ and obtain
$v=v_4$, $N^-_{D^{TC}}(v)=\{v_1\}$ and $R=\{v_4\}$.  At lines 12 and
13 we make recursive calls to ${\cal
B}(\{v_1,v_4\},\{v_2,v_3\},D)$ and ${\cal B}(\{v_1\},\{v_2,v_3\},D)$
respectively.

In the call to ${\cal B}(\{v_1,v_4\},\{v_2,v_3\},D)$, lines 1-4, we
obtain $v=v_3$ and $R=\{v_2,v_3\}$. This in turn generates calls to
${\cal B}(\{v_1,v_4, v_2,v_3\},\emptyset,D)$, which just outputs
$\{v_1,v_2,v_3,v_4\}$ and returns, and ${\cal
B}(\{v_1,v_4\},\{v_2\},D)$. The latter call generates calls to
${\cal B}(\{v_1,v_4, v_2\},\emptyset,D)$ and ${\cal
B}(\{v_1,v_4\},\emptyset,D)$, which output $\{v_1,v_2,v_4\}$ and
$\{v_1,v_4\}$, respectively.

In the call to ${\cal B}(\{v_1\},\{v_2,v_3\},D)$, where we are
effectively looking at $D_2$ above, we obtain $v=v_3$ and
$R=\{v_2,v_3\}$. This in turn generates calls to ${\cal
B}(\{v_1,v_2,v_3\},\emptyset,D)$, which just outputs
$\{v_1,v_2,v_3\}$ and returns, and ${\cal B}(\{v_1\},\{v_2\},D)$
(graph $D_3$ above). The latter call generates calls to ${\cal
B}(\{v_1, v_2\},\emptyset,D)$ and ${\cal B}(\{v_1\},\emptyset,D)$,
which output $\{v_1,v_2\}$ and $\{v_1\}$, respectively. This
completes the case $i=1$ in Step 3, and all the cc-sets containing
$v_1$ have been output.

Now we perform Step 3 with $i=2$, effectively looking at the graph
$D_4$.
\begin{center}
\footnotesize
\input{ex3.pic}
\end{center}
The call to ${\cal B}(\{v_2\},\{v_3,v_4,v_5\},D)$ 
generates further recursive calls in the following order\\
\mbox{\qquad} ${\cal B}(\{v_2,v_5,v_3\},\{v_4\},D)$\\
\mbox{\qquad\qquad} ${\cal B}(\{v_2,v_5,v_3,v_4\},\emptyset,D)$, output
$\{v_2,v_3,v_4,v_5\}$\\
\mbox{\qquad\qquad} ${\cal B}(\{v_2,v_5,v_3\},\emptyset,D)$, output $\{v_2,v_3,v_5\}$\\
\mbox{\qquad} ${\cal B}(\{v_2\},\{v_3,v_4\},D)$\\
\mbox{\qquad\qquad} ${\cal B}(\{v_2,v_3\},\{v_4\},D)$, output $\{v_2,v_3\}$\\
\mbox{\qquad\qquad} ${\cal B}(\{v_2\},\{v_4\},D)$, output $\{v_2\}$.\\
Thus all the cc-sets containing $v_2$ but not $v_1$ are output.

Performing Step 3 again with $i=3$, effectively looking at the graph
$D_5$ above, the call to ${\cal B}(\{v_3\},\{v_4,v_5\},D)$,
generates the following recursive calls\\
\mbox{\qquad} ${\cal B}(\{v_3,v_5\},\{v_4\},D)$\\
\mbox{\qquad\qquad} ${\cal B}(\{v_3,v_5,v_4\},\emptyset,D)$, output $\{v_3,v_4,v_5\}$\\
\mbox{\qquad\qquad} ${\cal B}(\{v_3,v_5\},\emptyset,D)$, output $\{v_3,v_5\}$\\
\mbox{\qquad} ${\cal B}(\{v_3\},\{v_4\},D)$, output $\{v_3\}$\\
which ouput all the cc-sets containing $v_3$ but not $v_1$ or $v_2$.

For the case $i=4$ in Step 3 we get the following calls\\
\mbox{\qquad} ${\cal B}(\{v_4\},\{v_5\},D)$\\
\mbox{\qquad\qquad} ${\cal B}(\{v_4,v_5\},\emptyset,D)$, output $\{v_4,v_5\}$\\
\mbox{\qquad\qquad} ${\cal B}(\{v_4\},\emptyset,D)$, output $\{v_4\}$\\
and for $i=5$ we get\\
\mbox{\qquad} ${\cal B}(\{v_5\},\emptyset,D)$, output $\{v_5\}$\\
after which $\cal A$ terminates.
\end{example}

\2

\begin{lemma}\label{corlem}
Algorithm $\cal A$ correctly outputs all cc-sets of $D$.
\end{lemma}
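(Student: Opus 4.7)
The plan is to prove soundness (every output is a cc-set of $D$) and completeness (every cc-set of $D$ is output) by structural induction on the recursion tree of the subroutine ${\cal B}$. By Lemma \ref{lemma_transitive} the cc-sets of $D$ coincide with those of $D^{TC}$, so after Step 1 I may work entirely inside $D^{TC}$, in which reachability is encoded by single arcs. The invariant I would maintain for each call ${\cal B}(X, Y, D)$ is that $X$ is a cc-set of $D^{TC}$ with $X \cap Y = \emptyset$; this holds at every top-level call ${\cal B}(\{v_i\}, \{v_{i+1}, \ldots, v_n\}, D)$ because singletons are always cc-sets.

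For soundness, the ``exclude'' branch at line 13 leaves $X$ unchanged and outputs occur only at line 10, so it suffices to verify that the ``include'' branch at line 12 produces $X \cup R$ that is again a cc-set of $D^{TC}$. Consider first the case $A \neq \emptyset$ (line 4): each $u \in N^-_{D^{TC}}(v) \cap A$ is an out-neighbor (in $D^{TC}$) of some vertex of $X$ and an in-neighbor of $v$, so $u$ lies on some path from $X$ to $v$ in $D$ and must appear in any convex set containing $X \cup \{v\}$; this shows $R$ is exactly the minimal enlargement preserving convexity, and connectedness of $X \cup R$ in $D^{TC}$ follows because every vertex of $R \setminus \{v\}$ is joined both to $X$ and to $v$ by arcs of $D^{TC}$. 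The case $A = \emptyset$, $B \neq \emptyset$ (line 9) is symmetric, with $v$ an in-neighbor of $X$ and $R = \{v\} \cup (N^+_{D^{TC}}(v) \cap B)$ forced into any convex extension of $X$ that contains $v$.

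For completeness, fix a cc-set $S$ of $D$ and let $i = \min\{j : v_j \in S\}$. Only the outer iteration indexed by $i$ can output $S$, since the final $X$ produced in iteration $j$ consists of $v_j$ together with vertices of strictly larger index. By induction on $|Y|$ I would show that whenever ${\cal B}(X, Y, D)$ is invoked with $X \subseteq S \subseteq X \cup Y$ it prints $S$. In the base case $A = B = \emptyset$ there is no arc of $D^{TC}$ between $X$ and $Y$ in either direction, so if $S$ properly contained $X$ then $D^{TC}[S]$ would split into $X$ and $S \setminus X$ as disjoint components in its underlying graph; hence $S = X$ and line 10 prints it. In the inductive step, if $v \in S$ then $R \subseteq S$ by the same path-forcing argument used for soundness, so the ``include'' call $(X \cup R, Y \setminus R)$ prints $S$; if $v \notin S$ then $S \subseteq X \cup (Y \setminus \{v\})$ and the ``exclude'' call does. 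The main obstacle is justifying in both line 4 and line 9 that $R$ is the minimal set of vertices forced into any convex extension of $X$ by the inclusion of $v$; once this is pinned down, everything else reduces to routine bookkeeping on the recursion tree.
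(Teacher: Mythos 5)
Your completeness half is essentially the paper's argument (minimality of the starting vertex, induction on $|Y|$, the observation that $v\in S$ forces $R\subseteq S$, and the base case via connectedness) and is fine. The gap is in the soundness half. You argue that every $u\in N^-_{D^{TC}}(v)\cap A$ is \emph{necessary} in any convex set containing $X\cup\{v\}$, and then conclude that ``$R$ is exactly the minimal enlargement preserving convexity.'' Necessity does not give you that $X\cup R$ is convex: a path between two vertices of $X\cup R$ could pass through a vertex that is neither in $R$ nor even in $X\cup Y$ (for instance a vertex discarded in an earlier line-13 branch), and then the leaf eventually reached would output a non-convex set. Your invariant ``$X$ is a cc-set and $X\cap Y=\emptyset$'' is too weak to rule this out; the paper's invariant additionally requires that $X\cup Y$ be convex, which is what guarantees that any offending intermediate vertex $y$ lies in $Y$, after which transitivity of $D^{TC}$ pushes $y$ into $A$ and then into $N^-_{D^{TC}}(v)\cap A\subseteq R$.

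Moreover, preserving that stronger invariant along the exclude branch (line 13), i.e.\ showing $X\cup(Y\setminus\{v\})$ is still convex, is exactly where the choice of $v$ as the \emph{largest} out-neighbour (resp.\ \emph{smallest} in-neighbour) in the acyclic ordering is used; your argument never invokes this choice, and an argument that does not cannot be correct, because with an arbitrary $v\in A$ the algorithm is wrong. Concretely, take $V=\{x,a,b\}$ with arcs $x\to a$, $a\to b$, $x\to b$ (already transitively closed): starting from $X=\{x\}$, if the first chosen neighbour were $a$ rather than the largest element $b$, the exclude branch would later reach $X=\{x,b\}$ with $A=B=\emptyset$ and output $\{x,b\}$, which is not convex. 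So the ``main obstacle'' you identify (minimality of $R$) is actually the easy direction, needed only for completeness; the missing idea is the strengthened invariant that $X\cup Y$ is convex together with the max/min rule that maintains it, from which convexity of $X\cup R$ follows by the two-arc transitivity argument given in the paper.
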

\begin{proof}
Recall, the convex (connected) sets of $D$ are precisely the convex
(connected) sets of $D^{TC}$. We prove the result for $D^{TC}$.

Firstly we show that all the sets $X$ output by $\cal A$ are in
$\mathcal{CC}(D^{TC})$. We will show that within $\cal A$, for any call
${\cal B}(X,Y,D)$ we have that $X\cap Y=\emptyset$, $X\cup Y$ is
convex and $X$ is a cc-set. This is clearly sufficient as $X$ is the only set output.

These properties hold for Step 3 when ${\cal
B}(\{v_i\},\{v_{i+1},\ldots,v_n\},D)$ is called as we have chosen an acyclic
ordering of the vertices.  Thus we assume that the properties hold for the sets
$X$, $Y$ and consider the pairs of sets $X\cup R$, $Y\backslash R$ and $X$,
$Y\backslash \{v\}$ constructed in ${\cal B}(X,Y,D)$. In both cases
clearly the intersections are empty, and since $R\subseteq
N^+_{D^{TC}}(X)\cup N^-_{D^{TC}}(X)$, $X\cup R$ is connected.

Now we will prove that $X\cup R$ is convex. Suppose that there is a path $u,y,w$
where $u,w\in X\cup R$. Note that if there exists an $(X\cup R)$-path then by
transitivity of $D^{TC}$ there exists an $(X\cup R)$-path of length two. By
convexity of $X\cup Y$ we have $y\in X\cup Y$. Also, $y\neq v$ as we have chosen $v$
to be either the maximal element of $N^+_{D^{TC}}(X)$ or the minimal element of
$N^-_{D^{TC}}(X)$, and $D^{TC}$ is transitive and thus the presence of the arcs $uy$
and $yw$ implies the presence of the arc $uw.$ Assume that $A\neq \emptyset$. Then
$R\subseteq N^+_{D^{TC}}(X)$. Since $u \in X \cup N^+_{D^{TC}}(X)$ and the arc $uy$
exists, the transitivity of $D^{TC}$ implies that $y\in N^+_{D^{TC}}(X)$. Since $X$
is convex it follows that not both vertices  $u,w$ can be in $X$  and that there is
no arc from $N^+_{D^{TC}}(X)$ to $X$. Thus $w\not\in X$ and so 
$w\in R\subseteq N^-_{D^{TC}}(X)$. By the transitivity of $D^{TC}$ and the fact that
$yw$ exists and that $w\in N^-_{D^{TC}}(X)$ we have $y\in N^-_{D^{TC}}(X)$ and thus
$y\in R$. Similarly if  $A=\emptyset$ then $R \subseteq N^-_{D^{TC}}(X)$ and by the
transitivity of $D^{TC}$ and since $w\in X\cup  N^-_{D^{TC}}(X)$  we have  $u\in R$
and thus $y\in N^-_{D^{TC}}(X)\cap N^+{D^{TC}}(X)$.

Secondly we show that if $X\not=\emptyset$ is cc then $X$ is output
by $\cal A$.
If $S$ is a cc-set and $j=\min\{i:\ v_i\in S\}$ then $\{v_j\}
\subseteq S\subseteq \{v_j,v_{j+1}, \ldots, v_{n} \}$. Thus it is
sufficient to show that if $S$ is cc and $X \subseteq S \subseteq X
\cup Y$ then ${\cal B}(X,Y,D)$ outputs $S$. We prove this by
induction on $|Y|$.

If $(N^+_{D^{TC}}(X)\cap Y)
=\emptyset=(N^-_{D^{TC}}(X)\cap Y)$ then, since $S$ is connected,
$S=X$ and ${\cal B}(X,Y,D)$ outputs $X$ at line 10. This proves the
result for $|Y|=0$, and for $|Y|\geq 1$ we may assume that $v\in
(N^+_{D^{TC}}(X)\cup Y \cup N^-_{D^{TC}}(X))$.

If $v\not\in S$ then we have $X\subseteq
S\subseteq(X\cup(Y\backslash\{v\}))$ and $|Y\backslash\{v\}|<|Y|$, so
by induction the call to ${\cal B}(X,Y\backslash\{v\},D)$ at line 13
outputs $S$. If $r\in (R\backslash\{v\})$, we have arcs $rv$ and
$xr$, for some $x\in X\subseteq S$. Thus, if $v\in S$, by convexity
of $S$ we have $R\subseteq S$. Then, since $|Y\backslash R|<|Y|$,
the call to ${\cal B}(X\cup R,Y\backslash R,D)$ at line 12 outputs
$S$.
\end{proof}

\begin{lemma}\label{runtlem}
The running time of $\cal A$ is $O(n\cdot cc(D)).$
\end{lemma}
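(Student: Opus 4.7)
The plan is to decompose the running time of $\cal A$ into three parts: the preprocessing cost of Steps 1--2, the total number of recursive invocations of ${\cal B}$ made in Step 3, and the work done per invocation. For the preprocessing, Theorem~\ref{remtr} gives the transitive closure in $O(n^{2.376})$, and an acyclic ordering is computed in $O(n+m)=O(n^2)$; since Theorem~\ref{lower_bound} yields $cc(D)\ge n(n+1)/2$, the target bound $n\cdot cc(D)$ is $\Omega(n^3)$, absorbing both preprocessing terms.

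Next I would count the recursive invocations. They form a forest of $n$ binary recursion trees, one rooted at each invocation made by Step 3. Every call ${\cal B}(X,Y,D)$ is either a leaf---precisely when $A=B=\emptyset$, whereupon line 10 outputs $X$ and the call returns---or an internal node with exactly two children (the calls at lines 12 and 13). By Lemma~\ref{corlem}, each cc-set of $D$ is output at exactly one leaf, so the forest has $cc(D)$ leaves in total. Because any full binary tree with $L$ leaves has $L-1$ internal nodes, summing over the $n$ trees yields at most $2\cdot cc(D)-n=O(cc(D))$ calls in all.

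The main obstacle is to show that the amortized work per invocation is $O(n)$. I would represent $X$ and $Y$ as length-$n$ bit vectors together with maintained per-vertex information (for instance counters $c^+(y)=|\{x\in X:xy\in A(D^{TC})\}|$ and $c^-(y)=|\{x\in X:yx\in A(D^{TC})\}|$) so that $A$, $B$, $v$, and $R$ at lines 1--9 can each be produced in a single $O(n)$ pass. The delicate part is the state update when descending the line-12 branch and its matching undo on return, which I would bound using the amortized inequality $\sum_T|R_T|\le n\cdot cc(D)$, the sum running over all line-12 calls $T$: for any vertex $r$ the calls $T$ with $r\in R_T$ have pairwise disjoint $T_L$-subtrees (once $r$ is placed in $X$ at $T_L$ it remains in $X$ thereafter), and each such subtree contains at least one cc-set leaf that contains $r$, so $|\{T:r\in R_T\}|\le|\{S\in\scc(D):r\in S\}|$; summing over $r$ gives $\sum_T|R_T|\le\sum_{S\in\scc(D)}|S|\le n\cdot cc(D)$. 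Combining $O(cc(D))$ calls at $O(n)$ amortized work each with the preprocessing bound yields the claimed $O(n\cdot cc(D))$ total.
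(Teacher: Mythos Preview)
Your decomposition into preprocessing, recursion-tree counting, and per-call work mirrors the paper's proof exactly, and your handling of the first two parts is fine. The gap is in the third part, the per-call work.

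You propose to maintain the counters $c^+(y)=|\{x\in X:xy\in A(D^{TC})\}|$ and $c^-(y)$, and then bound the total line-12 update work by $\sum_T|R_T|\le n\cdot cc(D)$. But the line-12 update cost with these counters is not $O(|R_T|)$: when $R$ is merged into $X$, for every $r\in R$ and every neighbour $z$ of $r$ in $D^{TC}$ you must adjust $c^\pm(z)$, so the cost at a node $T$ is $\Theta\!\bigl(\sum_{r\in R_T}\deg_{D^{TC}}(r)\bigr)$, which can be $\Theta(n\,|R_T|)$. Plugging this into your amortized inequality gives only $O(n^2\cdot cc(D))$, not $O(n\cdot cc(D))$. (Concretely, on the transitive tournament this extra factor of $n$ really appears.) Thus your amortized bound on $\sum_T|R_T|$ is correct but does not bound the quantity you need.

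The paper closes this gap differently: it does not keep counters at all but passes the \emph{sets} $N^+_{D^{TC}}(X)$ and $N^-_{D^{TC}}(X)$ as parameters, and exploits transitivity of $D^{TC}$ together with convexity of $X$ and the specific form of $R$ to obtain the identities
\[
N^+_{D^{TC}}(X\cup R)=N^+_{D^{TC}}(X)\setminus R,\qquad
N^-_{D^{TC}}(X\cup R)=\bigl(N^-_{D^{TC}}(X)\cup N^-_{D^{TC}}(v)\bigr)\setminus(R\cup X)
\]
when $A\neq\emptyset$ (and the symmetric pair when $A=\emptyset$). These make the line-12 update cost genuinely $O(|Y|)\le O(n)$ per call, so the amortized $\sum_T|R_T|$ argument is never needed: $O(cc(D))$ calls times $O(n)$ each gives the bound directly. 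The missing ingredient in your proposal is precisely this structural observation about $N^\pm_{D^{TC}}(X\cup R)$.
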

\begin{proof} Note that by Theorem \ref{lower_bound} and the fact that $D$ is
connected we have $n \times cc(D) \geq n^2(n+1)/2$. Therefore the
transitive closure of $D$ can be found in $O(n \cdot cc(D))$ time,
by Theorem \ref{remtr}. It is well-known that an acyclic ordering can
be found in time $O(n+m)$, see, e.g., \cite{bang2000}, and clearly
the sets $N^+_{D^{TC}}(v)$ and $N^-_{D^{TC}}(v)$ can be computed at
the start of the algorithm in $O(n)$ time, for each $v\in V(D)$.

We will now show that ${\cal B}(X,Y,D)$ runs in time $O(|Y| \cdot
cc'(X,Y)+K_{X,Y})$, where $cc'(X,Y)$ is the number of cc-sets $S$
such that $X \subseteq S \subseteq X \cup Y$ and $K_{X,Y}$ is the
sum of the sizes of the sets $S$. Note that ${\cal B}$ returns at
line 10 or makes two recursive calls to ${\cal B}$ (lines 12,13). If
${\cal B}$ returns at line 10 then we call this a {\em leaf} call
otherwise the function call is an {\em internal} call. All function
calls can be viewed as nodes of a binary tree (every node is a leaf
or has two children) whose leaves and internal nodes correspond to
calls to $\cal B$. It is easy to see, by induction, that the number
of internal nodes equals the number of leaves minus one. It is easy
to see, by induction on the depth of the call tree, that ${\cal B}$
outputs each set $S$ only once (${\cal B}(X\cup R,Y\backslash R,D)$
and ${\cal B}(X,Y\backslash\{v\},D)$ output those that contain $v$
and do not contain $v$, respectively). Thus we have $cc'(X,Y)$ leaf
calls and $cc'(X,Y)-1$ internal calls.

We assume that the set implementation allows us to find the size of
a set and the largest and smallest elements of the set in unit time.
Then the time taken by a call  ${\cal B}(X,Y,D)$ depends on the time
taken to calculate the sets $A$, $B$ and $R$. Since $A,B\subseteq
Y$, the time to compute $R$ is at most $O(|Y|)$. If we implement
${\cal B}(X,Y,D)\cap Y$ so that $N^+_{D^{TC}}(X)\cap Y$ and $N^-_{D^{TC}}(X)$
are passed in as parameters then the time taken to calculate $A$ and
$B$ is at most $O(|Y|)$. By definition of $R$ we have that  $N^+_{D^{TC}}(X
\cup R) = N^+_{D^{TC}}(X) - R$ and $N^-_{D^{TC}}(X \cup R) = N^-_{D^{TC}}(X) \cup
N^-_{D^{TC}}(v) - R -
X$ provided $A \not= \emptyset$, and $N^-_{D^{TC}}(X \cup R) = N^-_{D^{TC}}(X) - R$
and $N^+_{D^{TC}}(X \cup R) = N^+_{D^{TC}}(X) \cup N^+_{D^{TC}}(v) - R - X$ provided
$A=\emptyset$ (and $B \not= \emptyset$). Since $R\subseteq Y$, these
sets can be computed in $O(|Y|)$ time.

If ${\cal B}(X,Y,D)$ calls ${\cal B}(X',Y',D)$ then $|Y'|<|Y|$ thus
a call to ${\cal B}$ at an internal node takes at most  $O(|Y|)$ time,
and a call at a leaf node takes at most $O(|Y|+|X|)$ time, giving
the desired total time bound of $O(|Y| \cdot cc'(X,Y)+K_{X,Y})$.

We let $K_i$ denote the sum of the sizes of all the cc-sets $S$ such
that $v_i\in S\subseteq \{v_{i+1},\ldots,v_n\}$, and observe that
$K_1+\ldots +K_n\leq n\cdot cc(D)$.

Finally, by Step 3, we conclude that the total running time is
$$O\left(\sum_{i=1}^{n}cc'(\{v_i\},\{v_{i+1},v_{i+2},\ldots ,v_n\})\cdot
(n-i)+K_i\right)=O(cc(D)\cdot n).$$\end{proof}

\begin{theorem}\label{maint}
Algorithm $\cal A$ is correct and its time and space complexities
are $O(n\cdot cc(D))$ and $O(n^2)$, respectively.
\end{theorem}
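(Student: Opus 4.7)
The plan is that the theorem's two main assertions unpack almost directly from what has already been established, with only one genuinely new ingredient. Correctness is exactly the content of Lemma~\ref{corlem}, and the time bound $O(n\cdot cc(D))$ is exactly the content of Lemma~\ref{runtlem}, so the only remaining task is to establish the space bound $O(n^2)$.

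For the space analysis I would first account for the preprocessing: by Theorem~\ref{remtr} we store the transitive closure $D^{TC}$ (for instance as an adjacency matrix) using $O(n^2)$ space, together with the lists $N^+_{D^{TC}}(v)$ and $N^-_{D^{TC}}(v)$ for each $v\in V(D)$, which fit into $O(n^2)$ in total. The acyclic ordering computed in Step~2 takes only $O(n)$ additional space.

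The second and main step is to bound the space used by the recursive calls to ${\cal B}$. I would argue that the recursion depth is at most $n$ by observing that in both recursive branches (lines 12 and 13) the parameter $Y$ strictly shrinks: in line 13 it loses $v$, and in line 12 it loses the non-empty set $R\subseteq Y$. Hence at most $n$ frames are active on the call stack at any one time. At each frame, as discussed in the proof of Lemma~\ref{runtlem}, we keep the current sets $X$, $Y$, $A$, $B$, $R$, and the cached sets $N^+_{D^{TC}}(X)\cap Y$ and $N^-_{D^{TC}}(X)\cap Y$ that are passed as parameters; each of these has size $O(n)$, so each frame uses $O(n)$ working space. Multiplying gives $O(n^2)$ for the entire recursion stack, which combined with the preprocessing storage yields the claimed bound.

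I do not expect a serious obstacle here: the only subtle point is to make sure that the auxiliary sets passed between recursive calls are not re-copied in a way that blows up the stack, and that the transitive closure data structure is indeed stored only once (globally) rather than duplicated per frame. Both are addressed by the observations above, so the argument reduces to a straightforward "depth $\times$ per-frame" bookkeeping combined with the lemmas already proved.
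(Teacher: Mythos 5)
Your proposal is correct and follows essentially the same route as the paper: correctness and the time bound are taken directly from Lemmas~\ref{corlem} and \ref{runtlem}, and the space bound comes down to the $O(n^2)$ storage of the transitive closure in Step~1. The paper simply asserts that Step~1 dominates the space usage, whereas you additionally verify that the recursion stack fits within $O(n^2)$ (depth at most $n$, with $O(n)$ per frame), which is a harmless and welcome elaboration rather than a different argument.
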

\begin{proof} The correctness and time complexity follows from the two lemmas
above. The space complexity is dominated by the space complexity of
Step 1, $O(n^2).$\end{proof}

\2

Since $cc(D)$ may well be exponential, we may wish to generate only
a restricted number $k$ of cc-sets. Theorem \ref{kccsets} can be
viewed as a result in fixed-parameter algorithmics \cite{downey1999}
with $k$ being a parameter.

\begin{theorem}\label{kccsets}
To output $k$ cc-sets the algorithm $\cal A$ requires $O(n^{2.376}+kn)$
time.
\end{theorem}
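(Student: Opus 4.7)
The plan is to re-examine the proof of Lemma \ref{runtlem}, separating the one-off preprocessing cost from the per-recursive-call cost, and to add a single tree-counting argument that bounds the work of the recursion once the algorithm is halted as soon as $k$ cc-sets have been emitted.

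First, Step~1 builds $D^{TC}$ in $O(n^{2.376})$ by Theorem \ref{remtr}; Step~2 finds an acyclic ordering in $O(n+m)=O(n^2)$; and the sets $N^{\pm}_{D^{TC}}(v)$ can be tabulated once in $O(n^2)$. The outer loop of Step~3, which we terminate as soon as $k$ outputs have been produced, runs at most $n$ times and spends $O(n)$ on initialisation per iteration, for $O(n^2)$. All of this is independent of $k$ and is absorbed into the $O(n^{2.376})$ term.

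Next I would analyse the recursive exploration. Exactly as in the proof of Lemma \ref{runtlem}, the calls to $\cal B$ issued from a single iteration $i$ of Step~3 form a binary tree whose internal nodes are the calls that execute lines~12--13 and whose leaves are the calls that terminate at line~10, each outputting one cc-set; hence \#internal = \#leaves $-1$. The $n$ iterations generate a forest of such trees, and the recursion depth is at most $n$ because each recursive call strictly shrinks $|Y|\le n$. Moreover, Lemma \ref{runtlem} already shows that the work inside a single call of $\cal B$, not counting its recursive descendants, is $O(|Y|+|X|)=O(n)$: this is the cost of computing $A$, $B$, $R$ and of maintaining the cached in- and out-neighborhoods passed to the recursive calls.

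The main step, and the only genuinely new ingredient, is to bound the number of call-tree nodes that have been visited at the moment $k$ cc-sets have been output. Let $s$ denote the depth of the call stack at that moment; then $s\le n$. Every visited internal node either lies on that stack or lies inside a maximal fully-processed subtree of the forest, and the $k$ visited leaves are exactly the leaves of those fully-processed subtrees. Since any binary tree with $\ell$ leaves has $2\ell-1$ nodes in total, the fully-processed portion contains at most $2k$ nodes, giving at most $2k+s\le 2k+n = O(k+n)$ visited nodes in all. Multiplying by the $O(n)$ per-call cost yields $O(n(k+n))=O(kn+n^2)$ for the recursive part; combined with the $O(n^{2.376})$ preprocessing this gives the desired $O(n^{2.376}+kn)$ bound. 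The only real obstacle is this tree-counting step, since a naive bound would charge $O(n)$ per output and miss that the internal-node overhead is linear, not multiplicative, in $k$; everything else follows directly from Theorem \ref{remtr} and Lemma \ref{runtlem}.
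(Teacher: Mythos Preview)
Your argument is correct. Both you and the paper reduce the theorem to the same two facts already established in Lemma~\ref{runtlem}: the calls to $\cal B$ form a binary tree (forest) whose leaves are exactly the outputs, and each call costs $O(n)$ excluding its descendants. The difference lies only in how the node count is bounded. The paper proceeds by induction on $k$: it removes the first leaf $x$ and its parent $y$, contracts the edge, and applies the induction hypothesis to the resulting tree $T'$, picking up $O(n)$ for the two removed nodes at each step. You instead give a one-shot combinatorial decomposition of the set of visited nodes into the current call stack (at most $n$ nodes) and a collection of fully processed binary subtrees whose leaves are the $k$ emitted outputs (hence at most $2k$ nodes in total). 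Both routes yield the same $O(k+n)$ bound on the number of visited call-tree nodes; your decomposition is somewhat more direct and handles the forest arising from the outer loop of Step~3 without the paper's extra assumption that $k$ does not exceed the number of cc-sets containing $v_1$.
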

\begin{proof} We may assume that $k$ is at most the number
of cc-sets containing vertex $v_1$ since otherwise the proof is
analogous.

We consider the binary tree $T$ introduced in the proof of Lemma
\ref{runtlem} and prove our claim by induction on $k$. It takes
$O(n^{2.376})$ time to perform Steps 1,2 and 3. It takes $O(n)$ internal
nodes of $T$ to reach the first leaf of $T$ and, thus, for $k=1$ we
obtain $O(n^{2.376}+n)$ time. Assume that $k\ge 2$. Let $x$ be the first
leaf of $T$ reached by $\cal A$, let $y$ be the parent of $x$ on
$T$, let $z$ be another child of $y$ on $T$ and let $u$ be the
parent of $y$. Observe that after deleting the nodes $x$ and $y$ and
adding an edge between $u$ and $z$, we obtain a new binary tree
$T'$. By induction hypothesis, to reach the first $k-1$ leaves in
$T'$, we need $O(n^{2.376}+(k-1)n)$ time. To reach the first $k$ leaves in
$T$, we need to reach $x$ and the first $k-1$ leaves in $T'$. Thus,
we need to add to $O(n^{2.376}+(k-1)n)$ the time required to visit $x$ and
$y$ only, which is $O(n).$ Thus, we have proved the desired bound
$O(n^{2.376}+kn)$.
\end{proof}

\section{Generating Convex Sets in Acyclic Digraphs}\label{all}

It is not hard to modify $\cal A$ such that the new algorithm will
generate all convex sets of an acyclic digraph $D$ in time $O(n\cdot
co(D))$, where $co(D)$ is the number of convex sets in $D$. However,
a faster algorithm is possible and we present one in this section.

To obtain all convex sets of $D$ (and $\emptyset$, which is not
convex by definition), we call the following recursive procedure
with the original digraph $D$ and with $F=\emptyset$. This call
yields an algorithm whose properties are studied below.

A vertex $x$ is a {\em source} ({\em sink}) if it has no in-neighbors (out-neighbors).
In general, the procedure {\cs}\  takes as input an acyclic digraph
$D=(V,A)$ and a set $F\subseteq V$ and outputs all convex sets of $D$
which contain $F$.  The procedure {\cs}\ outputs $V$ and then
considers all sources and sinks of the graph that are not in $F$. 
For each such source or sink $s$, we call {\cs}$(D-s,F)$ and then
add $s$ to $F$. Thus, for each sink or source $s\in V\setminus F$
we consider all sets that contain $s$ and all sets that do not
contain $s$.

{\obeylines\obeyspaces{

     {\cs}($D=(V,A),F$)
     1.  {\bf output} $V$
     2.  {\bf for all} $s\in V\setminus F$ with $|N^+(s)|=0$ or $|N^-(s)|=0$ {\bf do} \{
     3.       {\bf for all} vertices $v$ find $N^+_{D-s}(v)$ and $N^-_{D-s}(v)$
     4.        call {\cs}$(D-s,F)$; set $F:=F\cup \{s\}$
     5.       {\bf for all} vertices $v$ find $N^+_{D}(v)$ and $N^-_{D}(v)$            \}

}}

\subsection{Correctness of the procedure}

Proposition \ref{prop:convex} and Theorem \ref{prop:unique} imply
that the procedure {\cs}\ is correct. We first show that all sets
generated in line 1 are, in fact, convex sets. To this end, we use
the following lemma.

\begin{lemma}\label{lem:delete}
Let $D$ be an acyclic graph, let $X$ be a convex set of $D$, and let
$s\in X$ be a source or sink of $D[X]$. Then $X\setminus \{s\}$ is a
convex set of $D$.
\end{lemma}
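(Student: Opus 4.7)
The plan is a direct contrapositive/contradiction argument that uses convexity of $X$ to pin down the only way $X\setminus\{s\}$ could fail to be convex. By symmetry (reversing all arcs), it suffices to treat the case where $s$ is a source of $D[X]$; the sink case then follows from the same argument applied to the reverse digraph.

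Assume for contradiction that $X\setminus\{s\}$ is not convex. Then there exists a directed path $P = u_0,u_1,\ldots,u_k$ in $D$ with endpoints $u_0,u_k \in X\setminus\{s\}$ and some internal vertex $u_i \notin X\setminus\{s\}$. The first step is to observe that \emph{every} internal vertex of $P$ must actually lie in $X$: otherwise $P$ would itself witness a failure of convexity of $X$, contradicting the hypothesis. Therefore $u_i \in X$, and since $u_i \notin X\setminus\{s\}$, necessarily $u_i = s$.

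The second and final step is to extract from $P$ an in-neighbor of $s$ inside $X$. The vertex $u_{i-1}$ (which exists because $u_0 \neq s$, so $i\ge 1$) is either $u_0 \in X$ or an internal vertex of $P$, and we have just shown every internal vertex lies in $X$. Either way, $u_{i-1} \in X$ and the arc $u_{i-1}u_i = u_{i-1}s$ lies in $D$, so $u_{i-1}$ is an in-neighbor of $s$ in $D[X]$. This contradicts the assumption that $s$ is a source of $D[X]$. The symmetric sink case uses $u_{i+1} \in X$ as an out-neighbor in $D[X]$.

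I expect no real obstacle here: the only mildly subtle point is the reduction that forces the "offending" internal vertex of the violating path to be exactly $s$ (rather than some arbitrary vertex outside $X$), and this follows immediately from the convexity of $X$ itself. Acyclicity is not actually needed in the argument, though it is assumed by the statement.
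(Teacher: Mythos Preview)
Your proof is correct and follows essentially the same approach as the paper: assume $X\setminus\{s\}$ is not convex, use convexity of $X$ to force every vertex of a violating path to lie in $X$ (so the offending internal vertex must be $s$), and then observe that $s$ would have both an in-neighbor and an out-neighbor in $X$ along that path, contradicting that $s$ is a source or sink of $D[X]$. Your remark that acyclicity is not actually used is also correct.
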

\begin{proof}
Suppose that  $X\setminus \{s\}$ is not convex in $D$. Then there
exist two vertices $u,v\in X\setminus \{s\}$ and a directed path $P$
from $u$ to $v$ which contains a vertex not in $X\setminus \{s\}$.
Since $X$ is convex, $P$ only uses vertices of $X$ and in particular $s\in P$. 
Thus, there is a subpath $u'sv'$ of $P$ with $u',v'\in X$.
But since $s$ is a source or a sink in $D[X]$ such a subpath cannot
exist, a contradiction.
\end{proof}

Now we can prove the following proposition.

\begin{proposition}\label{prop:convex}
Let $D=(V,A)$ be an acyclic digraph and let $F\subseteq V$. Then
every set output by {\cs}($D,F$) is convex.
\end{proposition}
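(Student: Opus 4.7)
My plan is to prove the proposition by induction on $|V\setminus F|$, strengthening slightly so that the inductive statement reads: for any acyclic digraph $D=(V,A)$ and any $F\subseteq V$, every set output by {\cs}$(D,F)$ is convex in $D$ itself. The base case is when $V\setminus F$ contains neither a source nor a sink of $D$; then the for-loop at line 2 is never entered, the only output is $V$, and $V$ is trivially convex.

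For the inductive step I would observe that line 1 outputs $V$ (again trivially convex), and that every other output comes from a recursive call of the form {\cs}$(D-s,F)$ where $s\in V\setminus F$ is a source or sink of $D$. Since $s\notin F$ we have $F\subseteq V(D-s)$ and $|V(D-s)\setminus F|<|V\setminus F|$, so the inductive hypothesis gives that every set output by that call is convex in $D-s$. The heart of the argument is then to transfer convexity from $D-s$ back to $D$: given $X\subseteq V\setminus\{s\}$ convex in $D-s$, suppose for contradiction that there is an $X$-path $P=u_1u_2\cdots u_k$ in $D$ with $u_1,u_k\in X$ and $u_2,\ldots,u_{k-1}\notin X$. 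If $s$ does not lie on $P$, then $P$ is a directed path of $D-s$ and contradicts convexity of $X$ there. Otherwise $s=u_i$ for some $1<i<k$, and then $s$ has both an in-neighbor $u_{i-1}$ and an out-neighbor $u_{i+1}$ in $D$, contradicting the hypothesis that $s$ is a source or a sink of $D$.

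The main obstacle, as I see it, is that convexity is not generally preserved under vertex deletion in either direction for an arbitrarily chosen vertex; the argument goes through because of the specific source/sink condition on $s$ with respect to the whole digraph $D$, which forbids $s$ from ever serving as an internal vertex of a directed path of $D$. I note in passing that Lemma \ref{lem:delete} concerns sources and sinks of the induced subgraph $D[X]$ rather than of $D$ itself and therefore plays no direct role here; the transfer step above is self-contained once the inductive hypothesis has been invoked, and with it the induction closes immediately.
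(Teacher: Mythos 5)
Your proof is correct, but it follows a genuinely different route from the paper's. The paper inducts on the size of the output set: it observes that any output $C$ other than the full vertex set arises as $C'\setminus\{c\}$ where $C'$ is a previously output (hence, by the induction hypothesis, convex) set and $c$ is a source or sink of $D[C']$, and then invokes Lemma~\ref{lem:delete} to conclude that $C$ is convex in the original digraph. You instead do a structural induction on the recursion itself (measured by $|V\setminus F|$), prove convexity of each output relative to the \emph{current} graph, and lift it one level at a time via your transfer step: a source or sink of the ambient graph can never be an internal vertex of a directed path, so a set convex in $D-s$ remains convex in $D$. In effect you replace the paper's ``deleting a source/sink of $D[X]$ preserves convexity'' by the reverse statement ``re-inserting a source/sink of $D$ cannot create a violating path,'' and so you never need to know that the parent set is convex. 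Both arguments are sound; the paper's choice buys reuse of Lemma~\ref{lem:delete}, which it needs anyway in the proof of Theorem~\ref{prop:unique}, while yours is self-contained and tracks the recursion more directly. Two small points of hygiene: in later iterations of the loop the fixed set passed to the recursive call is an enlarged $F$, which is harmless since your inductive statement quantifies over all $F$ and the measure still strictly decreases, but it is worth saying; and, like the paper, you quietly treat the possible output of $\emptyset$ (when all vertices get deleted) as ``trivially convex,'' an edge case the paper acknowledges in the text rather than in the proposition.
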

\begin{proof}
We prove the result by induction on the number of vertices of the
outputted set. The entire vertex set $V$ is convex and is outputted
by the procedure. Now assume all sets of size $n-i\geq 2$ that are
outputted by the procedure are convex. We will show that all sets of
size $n-i-1$ that are outputted are also convex. When a set $C$ is
outputted the procedure  {\cs}$(D[C],F')$ was called for some set
$F'\subseteq V$. The only way {\cs}$(D[C],F')$ can be invoked is
that there exist a set $C'\subset V$ and a source or sink $c$ of
$D[C']$ with $C=C'\setminus \{c\}$. Moreover $C'$ will be outputted
by the procedure and, thus, by our assumption is convex. The result
now follows from Lemma~\ref{lem:delete}.
\end{proof}

\begin{theorem}\label{prop:unique}
Let $D=(V,A)$ be an acyclic digraph and let $F\subseteq V$. Then
every convex set of $D$ containing $F$ is outputted exactly once by
{\cs}$(D,F)$.
\end{theorem}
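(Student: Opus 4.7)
The plan is to induct on $|V|$, examining how the outputs of {\cs}$(D,F)$ split between the line-1 output and the recursive calls inside the loop. Enumerate the sources and sinks of $D$ lying in $V\setminus F$, in the order the loop processes them, as $s_1,\ldots,s_k$. Just before the $i$-th iteration, $F$ has grown to $F\cup\{s_1,\ldots,s_{i-1}\}$, and the iteration calls {\cs}$(D-s_i,\,F\cup\{s_1,\ldots,s_{i-1}\})$. Since $s_i$ is a source or sink of $D$ it cannot be an internal vertex of any directed path of $D$, so for any $C\subseteq V\setminus\{s_i\}$, $C$ is convex in $D-s_i$ iff $C$ is convex in $D$. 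By the inductive hypothesis the $i$-th call therefore outputs, exactly once each, the convex sets of $D$ that omit $s_i$ and contain $F\cup\{s_1,\ldots,s_{i-1}\}$.

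The crucial structural fact I establish first is the following \textbf{claim}: if $C\subsetneq V$ is convex in an acyclic digraph $D$, then $V\setminus C$ contains at least one source or sink of $D$. To prove it, pick any $m\in V\setminus C$ and trace backwards along in-neighbors while remaining in $V\setminus C$; since $D$ is acyclic this terminates at some $m^{\ast}\in V\setminus C$ whose in-neighbors, if any, all lie in $C$. If $m^{\ast}$ has no in-neighbors at all it is a source of $D$ and we are done. Symmetrically, trace forwards from $m$ to some $m^{\ast\ast}\in V\setminus C$ which is either a sink of $D$ (done) or has all its out-neighbors in $C$. In the remaining case, pick $u\in C$ an in-neighbor of $m^{\ast}$ and $v\in C$ an out-neighbor of $m^{\ast\ast}$; then $u\to m^{\ast}\to\cdots\to m\to\cdots\to m^{\ast\ast}\to v$ is a directed path in $D$ with $u,v\in C$ and all interior vertices in $V\setminus C$, contradicting the convexity of $C$.

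Given the claim, existence and uniqueness follow together. For existence, let $C\supseteq F$ be a convex set of $D$: either $C=V$, output at line 1, or the claim supplies a source or sink of $D$ in $V\setminus C$, which must lie outside $F$ and so equals some $s_i$. Taking $i$ minimal with $s_i\notin C$ gives $F\cup\{s_1,\ldots,s_{i-1}\}\subseteq C\subseteq V\setminus\{s_i\}$, so by the inductive hypothesis the $i$-th iteration outputs $C$. For uniqueness, the top-level output $V$ contains every $s_j$ and hence differs from every iteration's output (each of which omits its own $s_j$), while iterations $i<j$ produce disjoint output families since iteration $j$'s outputs are forced to contain $s_i$ whereas iteration $i$'s outputs are forced to omit it. Combined with inductive uniqueness inside each recursive call, every convex set of $D$ containing $F$ is output exactly once.

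The main obstacle is the structural claim: since the algorithm can only peel off vertices that are sources or sinks of the current digraph, existence would fail were there some convex $C\supseteq F$ with $V\setminus C$ consisting entirely of internal vertices of $D$, and ruling this out is precisely the tracing-and-contradiction argument above. Once the claim is secured, the partition of convex sets by the least $s_i$ they omit, and the equivalence of convexity in $D$ and in $D-s_i$ for subsets avoiding $s_i$, are routine.
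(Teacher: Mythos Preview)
Your proof is correct, and its core structural insight---that the complement of any proper convex set contains a source or sink of $D$---is the same one the paper uses. The overall organization differs, though. The paper does not induct on $|V|$; instead it proves the structural claim in the more general form ``for every convex $H$ with $C\subset H\subseteq V$ there is a source or sink of $D[H]$ in $H\setminus C$'', applies it iteratively to build a peeling sequence $c_1,\ldots,c_t$ taking $V$ down to $C$, and then traces a single execution path through the recursion following that sequence. Your induction on $|V|$ lets you invoke the claim only once (with $H=V$) and absorb the iteration into the recursive calls, partitioning the outputs by the minimal index $i$ with $s_i\notin C$. Your treatment of uniqueness is also more explicit: the paper's argument (``each time we consider a source or sink there is a unique decision that finally yields $C$'') encodes the same partition but less formally. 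What the paper's route buys is a concrete description of the execution trace reaching $C$; what yours buys is a cleaner inductive structure and a sharper disjointness argument between the line-1 output and the outputs of different loop iterations.
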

\begin{proof}
Let $C$ be a convex set of $D$ containing $F$. We first claim that
there exist vertices $c_1,c_2,\ldots,c_t \in V$ with
$V=\{c_1,c_2,\ldots,c_t\}\cup C$ and $c_i$ is a source or sink of
$D[C\cup \{c_i,c_{i+1},\ldots,c_t\}]$ for all $i\in \{1,2,\ldots
,t\}$. To prove the claim we will show that for every convex set $H$
with
 $C\subset H\subseteq V$, there exists a source or sink $s\in H\setminus C$ of
the digraph $D[H]$. This will prove our claim as by
Lemma~\ref{lem:delete} $H\setminus \{s\}$ is a convex set of $D$ and we
can repeatedly apply the claim.

If there exists no arc from a vertex of $C$ to  a vertex of
$D[H\setminus C]$ then any source of $H\setminus C$ is a source of
$D[H]$. Note that $D[H\setminus C]$ is an acyclic digraph and, thus,
has at least one source (and sink). Thus we may assume that there is
an arc from a vertex $u$ of $C$ to a vertex $v$ of $H\setminus C$.
Consider a longest path $v=v_1v_2\ldots v_r$ in $D[H\setminus C]$
leaving $v$. Observe that $v_r$ is a sink of $D[H\setminus C]$ and,
moreover, there is no arc from $v_r$ to any vertex of $C$ since
otherwise there would be a directed path from $u\in C$ to a vertex
in $C$ containing vertices in $H\setminus C$ which is impossible as
$C$ is convex. Hence $v_r$ is a sink of $D[H]$ and the claim is
shown.

Next note that a sink or source remains a sink or source when
vertices are deleted. Thus when {\cs}$(D,F)$ is executed and  a
source or sink $s$ is considered, then we distinguish the cases when
$s=c_i$ for some $i\in \{1,2,\ldots ,t\}$ or when this is not the
case. If $s=c_i$  and we currently consider the digraph $D'$ and the
fixed set $F'$, then we follow the execution path calling
{\cs}$(D'-s,F')$. Otherwise we follow the execution path that adds
$s$ to the fixed set. When the last $c_i$ is deleted, we call
{\cs}$(D[C],F'')$  for some $F''$ and the set $C$ is outputed. It
remains to show that there is a unique execution path yielding $C$.
To see this, note that when we consider a source or sink $s$ then
either it is deleted of moved to the fixed set $F$. Thus every
vertex is considered at most once and then deleted or fixed.
Therefore each time we consider a source or sink there is a unique
decision that finally yields $C$.
\end{proof}

\subsection{Running time of \cs}

We assume that the input acyclic digraph $D=(V,A)$ is given by
the two adjacency lists for each vertex, and the number of in-neighbors and
out-neighbors is stored for each vertex. One can obtain this
information at the beginning in $O(n+m)$ time, where $n$ ($m$) is
the number of vertices (arcs) of the input connected acyclic digraph $D$. Observe
that we output the vertex set of $D$ as one convex set. Thus, it
suffices to show that the running time of {\cs}$(D,F)$ without the
recursive calls is $O(|V|)$. This will yield the running time
$O(\sum_{C\in \sco(D)}|C|)$ of {\cs}\ by
Theorem~\ref{prop:unique}.

Since we have stored the number of in-neighbors and out-neighbors
for every vertex $v\in V$, we can determine \emph{all} sources and
sinks in $O(|V|)$ time. For the recursive calls of {\cs}\ we delete
one vertex and have to update the number of in- respectively
out-neighbors of all neighbors of the deleted vertex $s$. The vertex
$s$ has at most $|V|-1$ neighbors and we can charge the cost of the
updating information to the call of {\cs}$(D-s,F)$. Moreover we store the neighbours
of $s$ so that we can reintroduce them after the call of  {\cs}$(D-s,F)$. Moving the
sinks
and sources to $F$ needs constant time for each source or sink and
thus we obtain $O(|V|)$ time in total.

In summary we initially need $O(n+m)$ time, and then each call of
{\cs}$(D,F)$ is charged with $O(|V|)$ before it is called and then
additionally with $O(|V|)$ time during its execution. Since we
output a convex set of size $O(|V|)$, the total running time is
$O(n+m)+O(\sum_{C\in \sco(D)}|C|)$. Since
 $\sum_{C\in \sco(D)}|C|=\Omega(n^2)$ by Theorem \ref{lower_bound}, the running
time of {\cs}\  is $O(\sum_{C\in \sco(D)}|C|)$.

\section{Implementation and Experimental Results}\label{expersec}

In order to test our algorithms $\cal A$ and {\cs}  for
practicality we have implemented and run them on several instances
of DDG's of basic blocks. We have compared our algorithm with the
state-of-the-art algorithm of Chen, Maskell and Sun \cite{chen}
(the CMS algorithm) using their own implementation, but with the
code for I/O constraint checking removed so as to ensure that their
algorithm was not disadvantaged. For completeness we have also
compared {\cs}\ to Atasu, Pozzi and Ienne's
algorithm~\cite{atasu2006} (the API06 algorithm). All the algorithms
were coded in C++ and all experiments were carried out on a 2 x Dual
Core AMD Opteron 265 1.8GHz processor with 4 Gb RAM, running SUSE
Linux 10.2 (64 bit).

Our first set of tests is based on C and C++ programs taken from the
benchmark suites of MiBench~\cite{MIBENCH} and
Trimaran~\cite{trimaran}. We compiled these benchmarks for both the
Trimaran (A,B,C,D,E) and SimpleScalar~\cite{simplescalar} (F,G,H,I)
architectures. From here we examined the control-flow graph for each
program to select a basic block within a critical loop of the
program (often this block had been unrolled to some degree to
increase the potential for efficiency improvements).

We considered basic blocks, ranging from  20 to 45 lines of low
level, intermediate, code, for which we generated the DDGs. We then
selected, from these DDGs, the non-trivial connected components on
which to run our algorithms.

We give some preference to benchmarks which suite the intended
application of the research taking our test cases from security
applications including benchmarks for the Advanced Encryption
Standard (B,C) and safety-critical software (A, E). We also include
a basic example from the Trimaran benchmark suite: Hyper (D), an
algorithm that performs quick sort (F), part of a jpeg algorithm
(G), and an example from the fft benchmark in mibench containing C
source code for performing Discrete Fast Fourier Transforms (H). The
final example is taken from the standard blowfish benchmark, an
encryption algorithm.

The results we have obtained are given in Table~\ref{ri:basic}.  In
the following tables NV denotes the number of vertices, NS denotes
the number of generated sets, NA number of arcs, CT denotes clock
time in $10^{-3}$ CPU seconds, and for the benchmark data ID
identifies the benchmark.

\begin{table}
\begin{center}
\begin{tabular}{|c|c|c|c||c||c|
} \hline ID&NV&NA&NS&CMS (CT)&$\cA$ (CT)\cr\hline\hline A &35 &38
&139,190 & 170& 96\cr\hline B &42 & 45& 4,484,110& 5,546 &
3,246\cr\hline C &26&28&5,891&6&4\cr\hline D
&39&94&3,968,036&4,346&2,710\cr\hline E
&45&44&1,466,961&1,750&1,156\cr\hline F &24&22&46,694&60&30\cr\hline
G &20&19&397&0&0\cr\hline H &20&21&1,916&0&0\cr\hline I
&43&47&10,329,762&13,146&7,210\cr\hline
\end{tabular}
\end{center}
\caption{cc-sets for benchmark programs} \label{ri:basic}
\end{table}
For examples G and H both algorithms ran in almost 0 time. For the
other examples, the above results demonstrate that our algorithm
$\cA$ outperforms the CMS algorithm.

We also consider examples with worst-case numbers of cc-sets. Let,
as in Theorem \ref{upper_bound}, $\vec{K}_{p,q}$ denote the digraph
obtained from the complete bipartite graph $K_{p,q}$ by orienting
every edge from the partite set of cardinality $p$ to the partite
set of cardinality $q.$ By Theorem \ref{upper_bound} the digraphs
$\vec{K}_{a,n-a}$ with $|n-2a|\le 1$ have the maximum possible
number of cc-sets. Our experimental results for digraphs
$\vec{K}_{a,n-a}$ with $|n-2a|\le 1$ are given in
Table~\ref{ri:max}. Again we see that $\cA$ outperforms the CMS
algorithm.

\begin{table}
\begin{center}
\begin{tabular}{|c|c|c||c||c|}
\hline NV &NA&NS&CMS (CT)&$\cA$ (CT)\cr\hline\hline 15& 56&  32,400&
30&16\cr\hline 16& 64&  65,041&     56&23\cr\hline 17& 72&  130,322&
114&60\cr\hline 18& 81&  261,139&    240&113\cr\hline 19& 90&
522,722&    540&253\cr\hline 20& 100& 1,046,549&  1,080&513\cr\hline
21& 110& 2,094,102&  2,166&1,048\cr\hline 22& 121& 4,190,231&
4,086&2,156\cr\hline
\end{tabular}
\end{center}
\caption{cc-sets for graphs with maximum number of cc-sets}
\label{ri:max}
\end{table}

We have compared algorithm {\cs}\ with both CMS running in
`unconnected' mode and with API06. The examples used are the same as
in Table 1, however we do not give results for examples B, D, E and
I as these graphs produce an extremely large number of convex sets
and as a result, do not terminate in reasonable time. The results
are shown in Table~\ref{ri:basicB}. We can see that although CMS
generally out-performs API06, there are two cases where API06 is
marginally better. However, {\cs}\ is consistantly three to five
times faster than either of the other algorithms.

\begin{table}
\begin{center}
\begin{tabular}{|c|c|c|c||c||c||c|
} \hline ID&NV&NA&NS&API06&CMS (CT)&{\cs}\ (CT)\cr\hline\hline A &35
&38 &1,123,851 &2,560 &1,390& 270\cr\hline C
&26&28&120,411&250&120&40\cr\hline F
&24&22&3,782,820&3,250&3,630&860\cr\hline G
&20&19&122,111&70&120&30\cr\hline H
&20&21&55,083&110&110&20\cr\hline
\end{tabular}
\end{center}
\caption{All convex sets for benchmark programs} \label{ri:basicB}
\end{table}

For interest we have also compared API06, CMS and {\cs}\ on the
digraphs that have maximal numbers of cc-sets. The results are shown
in Table~\ref{ri:maxB}. Again, while CMS and API06 are roughly
comparable, {\cs}\ is a least twice as fast as both of them.

\begin{table}
\begin{center}
\begin{tabular}{|c|c|c||c||c||c|}
\hline NV &NA&NS&API06&CMS (CT)&{\cs}\ (CT)\cr\hline\hline 15& 56&
32,768&     40&40&10\cr\hline 16& 64&  65,536&     70&70&30\cr\hline
17& 72&  131,072&    140&130&60\cr\hline 18& 81&  261,144&
320&320&130\cr\hline 19& 90&  524,288&    720&700&320\cr\hline 20&
100& 1,046,575&  1,590&1,500&710\cr\hline 21& 110& 2,097,152&
3,320&3,010&1,500\cr\hline 22& 121& 4,194,304&
7,140&6,310&3,120\cr\hline
\end{tabular}
\end{center}
\caption{All convex sets for graphs with maximum number of cc-sets}
\label{ri:maxB}
\end{table}

\section{Connected Sets Generation Algorithm}\label{consec}

Let $G$ be a connected (undirected) graph with vertex set
$V(G)=\{v_1,v_2,\ldots,v_n\}$ and let $G$ have $m$ edges. For a
vertex $x\in V(G)$ and a set $X\subseteq V(G)$, let $N(x)=\{z\in
V(G):\ xz\in E(G)\}$ and $N(X)=\bigcup_{x\in X}N(x)\setminus X.$ The
following is an algorithm, $\cal C$, for generating all connected
sets of $G$.

\begin{description}
\item[Step 1:]
For each $i=1,2,\ldots,n$ do the following.  Set $X:=\{v_i\}$ and
$Y:=\{v_{i+1},v_{i+2}, \ldots, v_{n} \}$. Initiate the set $N_X$ as
$N_X:=N(X) \cap Y$.

\item[Step 2 (subroutine ${\cal D}$):]
{\em Comment: $\cal D$ finds all connected sets $Q$ in $D$ such that
$X \subseteq Q \subseteq X \cup Y$}.

\begin{description}
\item[(2a):] If $N_X=\emptyset$ then return the connected set $X$ (and stop).

\item[(2b):] If $N_X \not=\emptyset$, then let $v \in N_X$ be arbitrary.

\item[(2c):]  {\em Comment: In this step we will find all connected sets $S$
such that $X\cup\{v\} \subseteq S \subseteq (X \cup Y)$}.\\
Set
$N_{X,0}:=N_X$, $X_0:=X$ and $Y_0:=Y$. Remove $v$ from $Y$ and
$N_X$, and add it to $X$. For every $u \in Y\setminus N_X$ check
whether $u$ has an edge to $v$ and if it does then add it to $N_X$.

\2 Make a recursive call to subroutine ${\cal D}$. {\em Comment: we
consider the new $X$ and $Y$}.

\2 Change $N_X$, $X$, and $Y$ back to their original state by
setting $N_X:=N_{X,0}$, $X:=X_0$, and $Y:=Y_0$.

\item[(2d):] {\em Comment: In this step we will find all connected
sets $S$  such that $X \subseteq S \subseteq (X \cup Y)$ and $v
\not\in S$}. Remove $v$ from $Y$ and remove $v$ from $N_X$.

\2

Make a recursive call to subroutine ${\cal D}$.

\2

Change $Y$ back to its original state by adding $v$ back to $Y$.
Also change $N_X$ back to its original state by adding $v$ to it.
\end{description}
\end{description}

Similarly to Theorem \ref{maint}, one can prove the following:

\begin{theorem}
Let $c(G)$ be the number of connected sets of a connected graph $G$.
Algorithm $\cal C$ is correct and its time and space complexities
are $O(n\cdot c(G))$ and $O(n+m)$, respectively.
\end{theorem}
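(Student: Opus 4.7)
The plan is to mirror the proofs of Lemmas~\ref{corlem} and~\ref{runtlem} and Theorem~\ref{maint}, adapting them from cc-sets in acyclic digraphs to connected sets in undirected graphs. Subroutine $\cal D$ again branches on a single pivot vertex $v$, producing a binary tree of calls whose leaves are in bijection with the output sets, so the same leaves/internal-nodes counting will give the time bound once the per-call cost is controlled.

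First I would establish the loop invariant that whenever $\cal D$ is entered we have $X$ connected, $X\cap Y=\emptyset$, and $N_X=N(X)\cap Y$. The invariant clearly holds after Step~1 and is preserved trivially by Step~2d. For Step~2c the choice $v\in N_X$ guarantees adjacency of $v$ to $X$, so $X\cup\{v\}$ remains connected, and the update rule yields the correct new $N_X$: a direct computation shows
\[
N(X\cup\{v\})\cap(Y\setminus\{v\})=(N_X\setminus\{v\})\cup\bigl(N(v)\cap(Y\setminus N_X)\bigr),
\]
which is exactly what the scan through $Y\setminus N_X$ inserts after $v$ has been removed from both $Y$ and $N_X$.

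Correctness then follows by induction on $|Y|$: when $N_X=\emptyset$ the only connected $Q$ with $X\subseteq Q\subseteq X\cup Y$ is $X$ itself, output at Step~2a; otherwise, partition the target connected sets by whether they contain the chosen pivot $v$, so that Step~2c enumerates those containing $v$ and Step~2d those not containing $v$. Each connected set is output exactly once because iteration $i$ of Step~1 produces only those connected sets whose smallest-indexed vertex is $v_i$. For the time bound, summing over all iterations the leaves of the recursion trees number $c(G)$ with at most $c(G)$ internal nodes; representing $X$, $Y$, $N_X$ as bit-arrays and implementing Step~2c by a single pass over the adjacency list of $v$ makes each invocation run in $O(\deg(v))=O(n)$ time outside its recursive calls, and the $O(n+m)$ cost of the Step~1 initialisations is absorbed since $c(G)\ge n$.

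The main obstacle is the space bound $O(n+m)$. The adjacency lists of $G$ use $O(n+m)$ and the bit-arrays use $O(n)$, but the recursion stack, of depth up to $n$, must record enough to undo the Step~2c updates. I would store at each frame only the pivot $v$ together with the list of vertices newly inserted into $N_X$; the key point is that along any root-to-descendant path in the recursion tree the pivots are pairwise distinct (each pivot is either moved into $X$ by Step~2c or removed from $Y$ by Step~2d, and every subsequent pivot lies in the current $N_X\subseteq Y$), so the total size of these undo lists is bounded by $\sum_{v\in V}\deg(v)=2m$, yielding $O(n+m)$ stack space and $O(n+m)$ space overall.
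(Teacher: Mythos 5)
Your proposal is correct and follows essentially the route the paper intends: the paper gives no explicit proof here, saying only that one argues ``similarly to Theorem~\ref{maint}'', i.e.\ by mirroring Lemmas~\ref{corlem} and~\ref{runtlem}, which is exactly what you do (invariant $N_X=N(X)\cap Y$, partition by the pivot, leaf/internal-node count in the binary call tree). Your additional observation --- storing only per-frame undo lists and noting that pivots are pairwise distinct along any root-to-descendant path, so the stack uses $O(n+m)$ rather than the $O(n^2)$ that literal copying of $X_0$, $Y_0$, $N_{X,0}$ would cost --- is a detail the paper leaves implicit but is needed for the stated space bound, and you handle it correctly.
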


\section{Discussions and Open Problems}\label{discsec}

Our computational experiments show that $\cal A$ performs well and
is of definite practical interest. We have tried various heuristic
approaches to speed up the algorithm in practice, but all approaches
were beneficial for some instances and inferior to the original
algorithm for some other instances. Moreover, no approach could
significantly change the running time. The algorithm was developed
independently from the CMS algorithm. However, the two algorithms
are closely related, and work continues to isolate the
implementation effects that give the performance differences.

\2 \2

\noindent{\bf Acknowledgements.}  We are grateful to the authors of
\cite{chen} for helpful discussions and for giving us access to
their code allowing us to benchmark our algorithm against theirs.
Research of Gregory Gutin and Anders Yeo was supported in part by an
EPSRC grant. Research of Gutin was also supported in part by the IST
Programme of the European Community, under the PASCAL Network of
Excellence, IST-2002-506778.

\end{document}